\newtheorem{theorem}{Theorem}[section]
\newtheorem{lemma}[theorem]{Lemma}
\newtheorem{corollary}[theorem]{Corollary}
\newtheorem{definition}[theorem]{Definition}
\renewcommand{\vec}[1]{\boldsymbol{#1}}
\newcommand{\R}{\mathbb{R}}
\newcommand \reals {\mathbb{R}}
\newcommand{\alg}{\mathcal{A}}
\newcommand{\numfunctions}{T}
\newcommand{\E}{\mathop{\mathbb{E}}}
\newcommand{\crit}{c}
\newcommand{\comment}[1]{}
\newcommand{\dmin}{\operatorname{d_{\rm min}}}
\newcommand{\dmax}{\operatorname{d_{\rm max}}}
\newcommand{\deltamin}{\Delta_\text{min}}
\newcommand{\deltamax}{\Delta_\text{max}}
\newcommand{\sample}{\mathcal{S}}
\newcommand{\pdim}{\textnormal{Pdim}}
\newcommand{\dist}{\mathcal{D}}
\newcommand{\distr}{\mathcal{D}}
\newcommand \OPT {\mathrm{OPT}}
\newcommand \vol {\operatorname{Vol}}
\newcommand \expect {\mathbb{E}}
\newcommand \partition {\mathcal{P}}
\newcommand \argmax {\operatorname*{argmax}}
\newcommand{\Pbi}{\Pi}
\newcommand{\algfamily}{\mathcal{A}}
\newcommand \norm [1] {\Vert#1\Vert}
\newcommand{\params}{\mathcal{P}}
\newcommand{\configs}{\mathcal{P}}
\begin{document}

\title{Data-driven Algorithm Design\footnote{Chapter $29$ of the book ``Beyond the Worst-Case Analysis of Algorithms''\citep{tim20}.}
}
\author{Maria-Florina Balcan\\ School of Computer Science\\Carnegie Mellon University\\\texttt{ninamf@cs.cmu.edu}}

\date{}

\maketitle

\begin{abstract}
Data driven algorithm design is an important aspect of modern data science and algorithm design. Rather than using off the shelf algorithms that only have worst case performance guarantees, practitioners often optimize over large families of parametrized algorithms and tune the parameters of these algorithms using a training set of problem instances from their domain to determine a configuration with high expected performance over future instances. However, most of this work comes with no performance guarantees. The challenge is that for many combinatorial problems of significant importance including partitioning, subset selection, and alignment problems, a small tweak to the parameters can cause a cascade of changes in the algorithm's behavior, so the algorithm's performance is a discontinuous function of its parameters.

In this chapter, we survey recent  work that helps put data-driven combinatorial algorithm design on firm foundations. We provide strong computational and statistical performance guarantees, both for the batch and online scenarios where a collection of typical problem instances from the given application are presented either all at once or in an online fashion, respectively.
\end{abstract}

\section{Motivation and Context}
The classic approach to designing and analyzing combinatorial algorithms (that has been the backbone of algorithmic research and applications since the field's inception) assumes that the algorithm we design for a given problem will be used to solve worst-case instances of the problem, about which the algorithm has absolutely no information at all.  The typical performance guarantees we aim for, in this classic framework, require that the algorithm we design must succeed even for solving just a one time, worst case instance of the underlying algorithmic problem. While ideal in principle, for many problems such worst case guarantees are often weak. Moreover, for many problems, empirically, different methods work better in different settings, and there are often large, even infinite parametrized families of methods that one could try to use. Consequently, rather than using off the shelf algorithms that have weak worst case guarantees, practitioners often employ a data-driven algorithm design approach; specifically, given an application domain, they use machine learning and instances of the problem coming from the specific domain to learn a method that works best in that domain. This idea has long been used in practice in various communities, including artificial intelligence~\citep{eric01,Xu08:SATzilla}, computational biology~\citep{DBK18}, and auction design~\citep{sand03}. However, so far, most of this work has come with no performance guarantees.

In this chapter we survey recent work that provides formal guarantees for this  data-driven algorithm design approach, by building and significantly expanding on learning theory tools. We discuss  both  the batch and online scenarios where a collection of typical problem instances from the given application are presented either all at once or in an online fashion, respectively.   This includes nearly optimal sample complexity bounds for the batch scenario and no-regret guarantees for the online scenario for a number of important algorithmic families that include classic modules such as greedy,  local search, dynamic programming, and  semi-definite relaxation followed by rounding. These are applicable to a wide variety of combinatorial problems (e.g.,  subset selection, clustering, partitioning, and alignment problems) from diverse domains ranging from data science to computational biology to auction design.  The key technical challenge is that for many of these problems, a small tweak to the parameters can cause a cascade of changes in the algorithm's behavior, so the algorithm's performance is a discontinuous function of its parameters.

At a technical level, this work draws on insights from some of the other approaches on algorithms beyond the worst case, including perturbation stability~\cite{bilulinial}  and  approximation stability~\citep{bbg13}. The motivation here is identical: many important optimization problems are unfortunately provably hard even to approximate well on worst-case instances, so using algorithms with worst case guarantees might be pessimistic.
The key difference is that that line of work aims to articulate specific regularities or stability properties that the input instances might satisfy, and to design algorithms that provably exploit them and overcome worst-case hardness results on instances satisfying them. In addition to providing algorithms with provable guarantees when these stability conditions hold, such analyses suggest interesting families of algorithms to learn over in the data-driven algorithm approach, that are even more broadly applicable (including in scenarios where verifying these properties might be  hard). Indeed, some of the algorithm families we study in this chapter (in the context of clustering problems in particular) are directly inspired by these analyses.

This topic is related in spirit to several widely popular topics in machine learning, including hyperparameter tuning and meta-learning. The key difference here is that we focus on parametric families of functions induced by algorithms for solving discrete optimization problems, which leads to cost functions with sharp discontinuities. This leads to very interesting challenges that require new techniques that help significantly push the boundaries of learning theory as well.

The goals of data-driven algorithm design are similar to those of self-improving algorithms.  The main take away of our chapter is that one can build on and extend tools from learning theory to achieve these goals for a wide variety of algorithmic problems.

\section{Data-driven Algorithm Design via Statistical Learning}
\label{se:dislearn}

~\cite{GR16, GR17} proposed analyzing data-driven algorithm design as a distributional learning problem, by using and extending the classic learning theory models, PAC~\citep{Valiant:acm84} and Statistical Learning Theory~\citep{Vapnik:book98}. In this framework, for a given algorithmic problem, we model an application domain as a distribution over problem instances, and assume that we have access to training instances that are drawn i.i.d from this fixed, but unknown distribution. The formal guarantees we aim for in this framework are generalization guarantees quantifying how many training problem instances are needed to ensure that an algorithm with good performance over the training instances will exhibit good performance on future problem instances. Such guarantees depend on the  intrinsic complexity  of  the search space which in this case is a parametrized family of algorithms for the problem  at  hand, and this intrinsic dimension is quantified using learning theoretic measures.

The challenge, and the reason that theoretical analysis is needed, is that it could be that parameter settings that work well on past instances perform poorly on future instances due to overfitting to the training data.   In particular, even if past and future instances are all drawn i.i.d. from the same probability distribution, if the algorithm family is sufficiently complex, it may be possible to set parameters that capture peculiarities of the training data (or even in the extreme case, memorize specific solutions to training instances), performing well on them without truly performing well on the instance distribution.  Sample complexity analysis provides guarantees on how many training instances are sufficient, as a function of the complexity of the algorithm family, to ensure that with high probability no such overfitting occurs.  Below, we formally describe the problem setup and how overfitting will be addressed through uniform convergence analysis.

\paragraph{Problem Formulation}
We fix an algorithmic problem (e.g., a subset selection problem or a clustering problem) and we  denote by $\Pbi$  the set of problem instances of interest for this problem. We also fix $\mathcal{A}$ a large (potentially infinite) family of algorithms, and throughout this chapter we assume that this family  is
parameterized by a set $\configs \subseteq \R^d$; we denote by $A_{\vec{\rho}}$ the algorithm in $\mathcal{A}$ parametrized by $\vec{\rho}$. We also fix   a utility function $u : \Pbi \times
\configs \to [0,H]$,  where $u(x, \vec{\rho})$ measures the
performance of the algorithm $A_{\vec{\rho}}$  on problem instance $x \in
\Pbi$. We denote by $u_{\vec{\rho}}(\cdot) $ the utility function $u : \Pbi
 \to [0,H]$ induced by $A_{\vec{\rho}}$,  where $u_{\vec{\rho}}(x) = u(x,\vec{\rho})$.  Note that  $u$ is bounded; for example, for cases where  $u$  is related to an algorithm's
 running time, $H$ can be the time-out deadline.

The ``application-specific information'' is modeled by the unknown input distribution $\distr$. The learning algorithm is given $m$ i.i.d. samples $x_1, . . . , x_m \in \Pbi$ from $\distr$,
and (perhaps implicitly) the corresponding performance $u_{\vec{\rho}}(x)$ of each algorithm
$A_{\vec{\rho}} \in \mathcal{A}$ on each input $x_i$.  The learning algorithm uses this information to suggest an
algorithm $A_{\hat{\vec{\rho}}} \in \mathcal{A}$ to use on future inputs drawn from $\distr$.
 We seek learning algorithms
that almost always output an algorithm of $\mathcal{A}$ that performs almost as well as the
optimal algorithm  $A_{{\vec{\rho}}^{\star}}$  for $\distr$ that  maximizes $\E_{x \sim \distr}[u_{\vec{\rho}}(x)]$ over $A_{\vec{\rho}} \in \mathcal{A}$.

\paragraph{Knapsack} As an example, a canonical problem we consider in this chapter is the knapsack problem. A knapsack instance $x$ consists of $n$ items, where each item $i$ has a value $v_i$ and a size $s_i$, together with an overall knapsack capacity $C$. Our goal is to find the most valuable subset of items
whose total size does not exceed $C$. For this problem we analyze a family of greedy algorithms parametrized by a one dimensional set, $\configs = \reals$. For $\rho \in \configs$, the algorithm $A_\rho$ operates as follows. We set the score of item $i$ to be $v_i / s_i^\rho$; then, in decreasing order of score, we add each item to the
knapsack if there is enough capacity left (breaking ties by selecting the item of smallest index).
The
 utility function  $u_\rho(x)=u(x, \rho)$ is defined as the value of the items chosen by
the greedy algorithm with parameter $\rho$ on input $x$.

\paragraph{Uniform Convergence}
To achieve our desired guarantees, we rely on uniform convergence results, which roughly speaking specify how many training instances we need in order to guarantee that with high probability (over the draw of the training set of instances) we have that, uniformly, for all the algorithms in the class $\mathcal{A}$,  their average performance over the sample is additively close to their expected performance on a typical (random) problem instance coming from the same distribution as  the training set.  It is known from empricial processes and learning theory that these uniform convergence results depend on the intrinsic complexity of the family of real-valued utility functions $\{u_{\vec{\rho}}(\cdot)\}_{\vec{\rho}}$. 
In this chapter we consider the pseudo-dimension as a measure of complexity, which roughly speaking quantifies the ability of the class to fit complex patterns.
%

\begin{definition} [Pseudo-dimension]
\label{def-pseudo}
Let $\{u_{\vec{\rho}}(\cdot)\}_{\vec{\rho}}$ be the family of performance measures induced by 
$\mathcal{A}= \{A_{\vec{\rho}}\}_{\vec{\rho}}$ and the utility function $u(x, \vec{\rho})$.
\begin{itemize}
  \item[(a)] Let $\sample= \left\{x_1, \dots, x_m\right\} \subset \Pbi$ be a set of problem instances and let $z_1, \dots, z_m \in \R$ be a set of \emph{targets}.
  We say that $z_1, \dots, z_m$ \emph{witness} the shattering of $\sample$ by
 $\{u_{\vec{\rho}}(\cdot)\}_{\vec{\rho}}$ if for all subsets $T \subseteq \sample$, there exists some parameter $\vec{\rho} \in \configs$ such that for all elements $x_i \in T$,
  $u_{\vec{\rho}}\left(x_i\right) \leq z_i$ and for all $x_i \not\in T$,
  $u_{\vec{\rho}}\left(x_i\right) > z_i$. We say that $\sample$ is {\em shattered} by $\{u_{\vec{\rho}}(\cdot)\}_{\vec{\rho}}$ if there exist $z_1, ..., z_m$ that witness its shattering.
  \item[(b)]  Let $\sample \subseteq \Pbi$ be the largest set that can be shattered by $\{u_{\vec{\rho}}(\cdot)\}_{\vec{\rho}}$. Then the pseudo-dimension of the class $\{u_{\vec{\rho}}(\cdot)\}_{\vec{\rho}}$ is $\pdim(\{u_{\vec{\rho}}(\cdot)\}_{\vec{\rho}}) = |\sample|$.
  \end{itemize}
\end{definition}

When $\{u_{\vec{\rho}}(\cdot)\}_{\vec{\rho}}$ is a set of binary valued functions, the notion of pseudo-dimension reduces to the notion of VC-dimension. 
%
\begin{theorem}~\label{thm:pdim}
Let $d_{\alg}$ be the pseudo-dimension of the family of utility functions $\{u_{\vec{\rho}}(\cdot)\}_{\vec{\rho}}$  induced by the class of algorithms $\mathcal{A}$ and the utility function $u(x, \vec{\rho})$; assume that the range of $u(x, \vec{\rho})$ is $[0,H]$.
For any $\epsilon > 0$, any $\delta \in (0,1)$ and any distribution $\dist$ over $\Pbi$,
$m = O \left(\frac{H^2}{\epsilon^2}\left( d_{\alg} + \ln\frac{1}{\delta}\right) \right)$ samples are sufficient to ensure that with probability  $1-\delta$
  over the draw of $m$ samples $\sample = \{x_1, \dots, x_m\} \sim \dist^m$, for
  all  $\vec{\rho} \in \configs$, the difference between the average utility of the algorithm $A_{\vec{\rho}}$  over the samples and its expected utiliy  is
  $ \leq  \epsilon$, i.e.: $$\left|\frac{1}{m}\sum_{i = 1}^m
  u_{\vec{\rho}}(x_i) - \E_{x \sim \dist}\left[u_{\vec{\rho}}(x)\right]\right| \leq
  \epsilon.$$
\end{theorem}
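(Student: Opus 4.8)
The plan is to obtain this as an instance of the classical uniform convergence bound for real-valued function classes whose capacity is controlled by the pseudo-dimension, following the Vapnik--Chervonenkis / Pollard / Haussler line of argument. First I would normalize: replacing $u$ by $u/H$ rescales every utility into $[0,1]$ and turns the target accuracy $\epsilon$ into $\epsilon/H$, which is exactly the source of the $H^2$ factor in the stated bound; so it suffices to prove the claim for $H = 1$. Then I would apply the standard symmetrization (``ghost sample'') lemma: drawing a second independent sample $x_1', \dots, x_m' \sim \dist^m$ and introducing independent Rademacher signs $\sigma_i$, one has, provided $m = \Omega(1/\epsilon^2)$,
\[
\prob\!\left[\,\sup_{\vec{\rho} \in \configs} \left|\tfrac{1}{m}\sum_i u_{\vec{\rho}}(x_i) - \E_{x \sim \dist} u_{\vec{\rho}}(x)\right| > \epsilon \,\right]
\;\le\;
2\,\prob\!\left[\,\sup_{\vec{\rho} \in \configs} \left|\tfrac{1}{m}\sum_i \sigma_i\bigl(u_{\vec{\rho}}(x_i) - u_{\vec{\rho}}(x_i')\bigr)\right| > \tfrac{\epsilon}{2} \,\right].
\]
The benefit of symmetrization is that, once we condition on the $2m$ realized instances, the supremum effectively ranges over only the finitely many distinct behaviors that $\{u_{\vec{\rho}}(\cdot)\}_{\vec{\rho}}$ exhibits on those points.

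The structural core of the proof is to bound that number of behaviors using $d_{\alg} = \pdim(\{u_{\vec{\rho}}(\cdot)\}_{\vec{\rho}})$. By the definition of pseudo-dimension, the class of subgraph indicators $(x,t) \mapsto \indic[u_{\vec{\rho}}(x) > t]$ has VC dimension $d_{\alg}$, so by the Sauer--Shelah lemma it realizes only polynomially many labelings of any finite set of (instance, threshold) pairs. Discretizing the threshold axis at resolution $\epsilon$ and combining with Sauer--Shelah yields an $L_1$ covering-number bound of the form $(c\,m/\epsilon)^{d_{\alg}}$ for $\{u_{\vec{\rho}}(\cdot)\}_{\vec{\rho}}$ restricted to the $2m$ sample points (Haussler's refinement gives the cleaner $e(d_{\alg}+1)(2e/\epsilon)^{d_{\alg}}$). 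For each representative $\vec{\rho}$ in such a cover, the symmetrized average is a bounded, mean-zero, sub-Gaussian quantity, so Hoeffding's inequality bounds its deviation beyond $\epsilon/2$ by $2\exp(-m\epsilon^2/32)$; a union bound over the cover and then solving $(c\,m/\epsilon)^{d_{\alg}}\exp(-\Omega(m\epsilon^2)) \le \delta$ for $m$ gives $m = O\!\bigl(\epsilon^{-2}(d_{\alg}\log(1/\epsilon) + \log(1/\delta))\bigr)$.

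What remains --- and the only genuinely delicate point --- is to remove the spurious $\log(1/\epsilon)$ factor in order to land exactly on the claimed $m = O\!\bigl(H^2\epsilon^{-2}(d_{\alg} + \log(1/\delta))\bigr)$. I would do this by replacing the single-scale union bound above with a chaining argument (Dudley's entropy integral): summing the covering-number contributions geometrically over dyadic scales $2^{-j}\epsilon$ makes the offending logarithm telescope into a constant, since $\int_0^{1}\sqrt{d_{\alg}\log(1/t)}\,\mathrm{d}t = O(\sqrt{d_{\alg}})$. Equivalently, one may simply invoke the packaged sharp uniform-convergence theorem stated directly in terms of pseudo-dimension (e.g., Pollard's theorem or the version in Anthony and Bartlett). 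I expect this last refinement to be the principal obstacle: symmetrization, Sauer--Shelah, and the Hoeffding union bound are all routine, whereas shaving the logarithmic factor requires the chaining machinery (or an appeal to the optimal agnostic-PAC-type bound).
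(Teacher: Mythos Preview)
Your sketch is correct and is essentially the standard Pollard/Haussler argument for uniform convergence under a pseudo-dimension bound, including the right identification of chaining (or a direct appeal to the sharp packaged bound) as the step that removes the $\log(1/\epsilon)$ factor. Note, however, that the paper does not actually prove this theorem: it is stated as a known result from empirical process and statistical learning theory and is invoked as background, so there is no ``paper's own proof'' to compare against beyond the implicit citation. Your write-up is therefore more detailed than what the paper provides; if anything, you could shorten it by simply citing the optimal uniform-convergence theorem (e.g., Pollard or Anthony--Bartlett) after the normalization step, which is in the spirit of how the paper treats the result.
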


Theorem~\ref{thm:pdim} implies that to obtain sample complexity guarantees it is sufficient to bound the pseudo-dimension of the family
 $\{u_{\vec{\rho}}(x)\}_{\vec{\rho} \in \configs}$.
Interestingly, many of the proofs in the literature for doing this proceed by providing (either implicitly or explicitly) a structural result for the dual class of functions, $\{u_x(\vec{\rho})\}_{x \in \Pbi}$, where  $u_x(\vec{\rho}) = u(x,\vec{\rho})=u_{\vec{\rho}}(x)$.
We present below a simple, but powerful  lemma  of this form, which we will use throughout the chapter for the case that our parameter vector $\vec{\rho}$ is just a single real number; this lemma is used implicitly or explicitly in several papers~\citep{GR16,BNVW17,BDW18}.
\begin{lemma}\label{lem:piecewiseconstant}
Suppose that for every instance $x \in \Pbi$, the function $u_x(\rho): \R\rightarrow \R$ is piecewise constant with at most $N$ pieces.  Then the family $\{u_\rho(x)\}$ has pseudo-dimension $O(\log N)$.
\end{lemma}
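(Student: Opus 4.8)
The plan is to bound the size of any shattered set. Suppose $\mathcal{S} = \{x_1, \dots, x_m\} \subseteq \Pbi$ is shattered by $\{u_{\rho}(\cdot)\}$, witnessed by targets $z_1, \dots, z_m \in \R$; I will show $m = O(\log N)$, which is precisely the claimed pseudo-dimension bound. The first move is to pass from the utility functions to threshold indicators: for each $i$ define $g_i \colon \R \to \{0,1\}$ by $g_i(\rho) = \indic\bigl[u_{x_i}(\rho) > z_i\bigr]$. By hypothesis $u_{x_i}(\cdot)$ is piecewise constant with at most $N$ pieces, so $\R$ is partitioned into at most $N$ intervals on each of which $u_{x_i}$, and hence $g_i$, is constant; in particular $g_i$ changes value at most $N-1$ times. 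Membership of $x_i$ in a subset $T$ realized by parameter $\rho$ is exactly the event $g_i(\rho) = 0$, so shattering $\mathcal{S}$ is equivalent to the vector-valued map $\rho \mapsto (g_1(\rho), \dots, g_m(\rho))$ attaining all $2^m$ values in $\{0,1\}^m$ as $\rho$ ranges over $\configs \subseteq \R$.

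Next I would count how many values that map can actually attain. Overlay the $m$ interval partitions of $\R$ and take their common refinement. The total number of breakpoints across all $i$ is at most $m(N-1)$, so the refinement consists of at most $2m(N-1) + 1 = O(mN)$ cells (the factor two conservatively accounts for singleton cells that may arise at breakpoints when pieces are half-open). On each cell of the refinement all of $g_1, \dots, g_m$ are simultaneously constant, so $(g_1(\rho), \dots, g_m(\rho))$ takes at most $O(mN)$ distinct values in total. Combining with the shattering requirement from the previous paragraph yields $2^m \le O(mN)$.

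Finally, solving this inequality: from $2^m \le c\, m N$ for an absolute constant $c$, taking base-two logarithms gives $m \le \log_2 m + \log_2 N + \log_2 c$, and since $m - \log_2 m$ is increasing and unbounded, this forces $m \le 2\log_2 N + O(1)$, i.e., $m = O(\log N)$. Hence $\pdim\bigl(\{u_{\rho}(\cdot)\}\bigr) = O(\log N)$. The only real content — and the step I would be most careful with — is the counting in the second paragraph: one must check that the threshold indicators genuinely inherit the piecewise-constant structure (handling values exactly at breakpoints, where closed-versus-half-open pieces could introduce extra singleton cells), and that the common refinement of $m$ partitions with at most $N$ pieces each has only $O(mN)$ cells rather than exponentially many. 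Everything else is the routine implication ``$2^m \le \mathrm{poly}(m)\cdot N \Rightarrow m = O(\log N)$.''
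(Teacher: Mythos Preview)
Your proof is correct and follows essentially the same approach as the paper's: both overlay the $m$ piecewise-constant structures, bound the number of cells in the common refinement by $O(mN)$, observe that the relevant $m$-tuple is constant on each cell, and solve $2^m \le O(mN)$ to get $m = O(\log N)$. Your explicit passage to the threshold indicators $g_i$ and your fussing over singleton cells at breakpoints are minor elaborations; the paper simply counts the $(N-1)m+1 \le Nm$ intervals between consecutive critical points and notes that shattering requires at least $2^m$ distinct value tuples.
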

\begin{proof}
Consider a problem instance $x \in \Pbi$.  Since the function $u_x(\rho)$ is piecewise constant with at most $N$ pieces, this means there are at most $N-1$ {\em critical points} $\rho_1^*, \rho_2^*, \ldots$ such that between any two consecutive critical points $\rho_i^*$ and $\rho_{i+1}^*$, the function $u_x(\rho)$ is constant.

Consider $m$ problem instances $x_1, \ldots, x_m$.  Taking the union of their critical points and sorting them, between any two consecutive of these critical points we have that {\em all} of the functions $u_{x_j}(\rho)$ are constant.  Since these critical points break up the real line into at most $(N-1)m+1 \leq Nm$ intervals, and all $u_{x_j}(\rho)$  are constant in each interval, this means that overall there are at most $Nm$ different $m$-tuples of values produced over all $\rho$.  Equivalently, the functions $u_\rho(x)$ produce at most $Nm$ different $m$-tuples of values on the $m$ inputs $x_1, \ldots, x_m$. However, in order to shatter the $m$ instances, we must have  $2^m$ different $m$-tuples of values produced.  Solving $Nm \geq 2^m$ shows that only sets of instances of size
$m=O(\log N)$ can be shattered.
\end{proof}


\subsection{Greedy Algorithms for Subset Selection Problems}
\label{se:knapsack}
In this section, we discuss  infinite parametrized families of greedy algorithms for subset selection problems introduced and analyzed in~\citet{GR16}. We start by discussing a specific family of algorithms for the canonical knapsack problem and then present a general result applicable to other problems including maximum weight independent set.

\paragraph{Knapsack} For the knapsack problem, let $\alg_{knapsack} = \{A_\rho\}$ be  the family of greedy algorithms described earlier. For this family $\configs=\R_{\geq 0}$, and for $\rho \in \configs$, for an instance $x$ where $v_i$ and $s_i$ are the value and
size of item $i$, the algorithm
$A_{\rho}$ adds the items to the knapsack in
decreasing order of $v_i/s_i^\rho$ subject to the capacity constraint. The
 utility function  $u(x, \rho)$ is defined as the value of the items chosen by
the greedy algorithm with parameter $\rho$ on input $x$.
 %
 %
We can show that the class $\alg_{knapsack}$ is not too complex, in the sense that its pseudo-dimension is small.

\begin{theorem}
\label{knapsack}
The family of utility functions $\{u_\rho(x)\}$ corresponding to $\alg_{knapsack}$ has pseudo
-dimension $O(\log n)$, where $n$ is the maximum number of items in an instance.
\end{theorem}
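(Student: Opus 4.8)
The plan is to apply Lemma~\ref{lem:piecewiseconstant}, so the whole task reduces to showing that for every fixed knapsack instance $x$ with at most $n$ items, the dual utility function $u_x(\rho)$ is piecewise constant in $\rho$ with at most $N = n^{O(1)}$ pieces; then $\pdim = O(\log N) = O(\log n)$ follows immediately. So the real content is bounding the number of pieces.

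First I would observe that the only way $\rho$ influences the algorithm $A_\rho$ is through the \emph{ordering} of the $n$ items induced by the scores $v_i/s_i^\rho$: once that ordering is fixed, the greedy algorithm's behavior (and hence $u_x(\rho)$) is completely determined. So I want to count how many distinct orderings of the items can arise as $\rho$ ranges over $\R_{\ge 0}$. For a fixed pair of items $i,j$, the relative order flips exactly when $v_i/s_i^\rho = v_j/s_j^\rho$, i.e.\ when $\rho = \ln(v_i/v_j)/\ln(s_i/s_j)$ (with appropriate handling of degenerate cases $s_i = s_j$ or equal ratios, where the order is constant in $\rho$ or never changes). Thus each of the $\binom{n}{2}$ pairs contributes at most one ``swap point'' on the $\rho$-axis. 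Sorting all $\le \binom{n}{2}$ swap points partitions $\R_{\ge 0}$ into at most $\binom{n}{2}+1 = O(n^2)$ intervals, and within each interval the pairwise comparisons — hence the full item ordering, hence $u_x(\rho)$ — are all constant.

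Therefore $u_x(\rho)$ is piecewise constant with $N = O(n^2)$ pieces for every instance $x$, and Lemma~\ref{lem:piecewiseconstant} gives $\pdim(\{u_\rho(x)\}) = O(\log N) = O(\log n)$, as claimed.

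The main obstacle is not conceptual but bookkeeping around tie-breaking and degeneracies: the algorithm breaks score ties by smallest index, so I need to argue that at a swap point the function value agrees with the value on (at least) one adjacent interval, so the ``critical points'' genuinely delimit constant pieces and no extra pieces are spuriously created; and I need to handle the cases where $v_i/s_i^\rho \equiv v_j/s_j^\rho$ for all $\rho$ (then the pair never contributes a swap) or where only a finite number of items have $s_i \neq 1$, etc. These are routine but must be checked to make the counting rigorous. A secondary point worth stating explicitly is that the $O(n^2)$ bound on pieces is uniform over all instances $x$, which is exactly the hypothesis Lemma~\ref{lem:piecewiseconstant} requires.
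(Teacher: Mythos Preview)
Your proposal is correct and essentially identical to the paper's proof: both argue that for a fixed instance $x$ the algorithm's output depends on $\rho$ only through the item ordering, that each of the $\binom{n}{2}$ pairs $(i,j)$ has at most one swap point $\rho = \log(v_i/v_j)/\log(s_i/s_j)$, hence $u_x(\rho)$ is piecewise constant with at most $\binom{n}{2}+1 \le n^2$ pieces, and then invoke Lemma~\ref{lem:piecewiseconstant} with $N=n^2$. Your remarks about tie-breaking and degenerate pairs match the paper's footnoted handling of the case $s_i=s_j$, $v_i=v_j$.
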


\begin{proof}
We first show that each function $u_x(\rho)$ is piecewise constant with at most $n^2$ pieces, and then apply Lemma \ref{lem:piecewiseconstant}.

To show the first part, fix some instance $x$.  Now, suppose  algorithm $A_{\rho_1}$ produces a different solution on $x$ than $A_{\rho_2}$ does for $\rho_1 < \rho_2$.  We argue there must exist some critical value $\crit \in [\rho_1,\rho_2]$ and some pair of items $i,j \in x$ such that $v_i/s_i^{\crit} = v_j/s_j^{\crit}$.  The reason is that if $A_{\rho_1}$  and $A_{\rho_2}$ produce different solutions on $x$, they must at some point make different decisions about which item to add to the knapsack.  Consider the first point where they differ: say that $A_{\rho_1}$  adds item $i$ to the knapsack and $A_{\rho_2}$ adds item $j$.  Then it must be the case that $v_i/s_i^{\rho_1} - v_j/s_j^{\rho_1}  \geq 0$ but $ v_i/s_i^{\rho_2} - v_j/s_j^{\rho_2} \leq 0$.    Since the function $f(\rho) = v_i/s_i^{\rho} - v_j/s_j^{\rho}$ is continuous,  there must exist some value $\crit \in [\rho_1,\rho_2]$ such that $v_i/s_i^{\crit} - v_j/s_j^{\crit} = 0$ as desired.

Now, for any given pair of items $i,j$, there is at most one value of $\rho\geq 0$ such that $v_i/s_i^\rho = v_j/s_j^\rho$; in particular, it is $\rho = \log(v_i/v_j)/\log(s_i/s_j)$.\footnote{Except for the special case that $s_i=s_j$ and $v_i=v_j$, but in that case the order the items are considered in is fixed by the tie breaking rule, so we can ignore any such pair.}  This means there are at most ${n \choose 2}$ critical values $\crit$ such that $v_i/s_i^{\crit} = v_j/s_j^{\crit}$ for some pair of items $i,j \in x$.
 By the argument above,
 all values of $\rho$ in the interval between any two consecutive
 critical values must produce the same behavior on the instance $x$.
This means there are at most ${n \choose 2}+1 \leq n^2$ intervals  such that all values of $\rho$ inside the same interval result in the exact same solution by algorithm $A_{\rho}$.

Now, we simply apply Lemma \ref{lem:piecewiseconstant} with $N=n^2$.
\end{proof}

\paragraph{ Maximum weighted independent set}
Another canonical subset selection problem is the maximum weighted independent set problem (MWIS).  An instance $x$ is a graph
with a weight $w\left(v\right) \in \R_{\geq 0}$ for each vertex $v$. The goal is to
find a set of mutually non-adjacent vertices with maximum total weight.
\cite{GR17} analyze a family $\alg_{MWIS}$ of
greedy heuristics that at each step selects the vertex maximizing
$w\left(v\right)/\left(1 + \deg\left(v\right)\right)^{\rho}$, where
$\rho \in \configs = [0, B]$ for some $B \in \R$, and then removes $v$ and its neighbors from the graph. Using a similar argument as in Theorem~\ref{knapsack} we can show that the family of utility functions $\{u_\rho(x)\}$ corresponding to $\alg_{MWIS}$ has pseudo-dimension $O(\log n)$, where  $n$ is the maximum number of vertices in an instance.

\paragraph{A general analysis for greedy heuristics}
We now more generally consider problems where the input is a set of
$n$ objects with various attributes, and the feasible solutions consist of assignments
of the objects to a finite set $Y$, subject to feasibility constraints. The attributes of
an object are represented as an element $\xi$ of an abstract set. For example, in the
Knapsack problem $\xi$ encodes the value and size of an object; in the MWIS problem,
$\xi$ encodes the weight and (original or residual) degree of a vertex. In the Knapsack
and MWIS problems, $Y= \{0, 1\}$, indicating whether or not a given object is selected.

~\cite{GR17} provide pseudo-dimension bounds for general greedy heuristics of the following form:

\begin{itemize}
\item[] While there remain unassigned objects,
\begin{enumerate}
\item[(a)] Use a scoring rule $\sigma$ (a function from attributes to $\R$) to compute a score $\sigma(\xi_i)$ for each unassigned object $i$, as a function of its current attributes $\xi_i$.
\item[(b)] For the unassigned object $i$ with the highest score, use an assignment
rule to assign $i$ a value from $Y$ and, if necessary, update the attributes
of the other unassigned objects. Assume that ties are
always resolved lexicographically.
\end{enumerate}
\end{itemize}

Assignment rules that do not modify objects' attributes yield nonadaptive greedy heuristics, which use only the original attributes of each object (like $v_i$ or $v_i/s_i$
in the Knapsack problem, for instance).
Assignment rules that modify object attributes yield adaptive greedy heuristics,
such as the adaptive MWIS heuristic described above. In a {\em single-parameter} family of scoring rules, there is a scoring rule of the form
$\sigma(\rho, \xi)$ for each parameter value $\rho$ in some interval $I \subseteq \reals$. Moreover, $\sigma$ is assumed
to be continuous in $\rho$ for each fixed value of $\xi$. Natural examples include Knapsack
scoring rules of the form $v_i/s_i^\rho$
and MWIS scoring rules of the form $w(v)/(1+deg(v))^\rho$
for $\rho \in [0, 1]$ or $\rho \in [0,\infty)$.

A single-parameter family of scoring rules is {\em $\kappa$-crossing} if, for
each distinct pair of attributes $\xi'$, $\xi''$, there are at most $\kappa$ values of $\rho$ for which
$\sigma(\rho, \xi') = \sigma(\rho, \xi'')$.
For example, all of the scoring rules mentioned above are $1$-crossing rules.

For an example assignment rule, in the Knapsack and MWIS problems, the rule
simply assigns $i$ to $1$ if it is feasible to do so, and to $0$ otherwise. In the adaptive greedy heuristic for the MWIS problem, whenever the assignment rule assigns $1$ to a vertex $v$, it updates the residual degrees of other unassigned
vertices (two hops away) accordingly. Say that an assignment rule is {\em $\beta$-bounded} if every object $i$ is guaranteed to take
on at most $\beta$ distinct attribute values. For example, an assignment rule that never
modifies an object's attributes is $1$-bounded. The assignment rule in the adaptive
MWIS algorithm is n-bounded, since it only modifies the degree of a vertex (which
lies in $\{0, 1, 2, \ldots, n-1 \}$).
Coupling a single-parameter family of $\kappa$-crossing scoring rules with a fixed $\beta$-bounded assignment rule yields a $(\kappa,\beta)$-single-parameter family of greedy heuristics.
The knapsack greedy heuristic is a $(1, 1)$-single-parameter family and the adaptive MWIS heuristic is a $(1, n)$-single-parameter family.

\begin{theorem}
Let $\alg_{greedy}$ be a $(\kappa,\beta)$  single parameter family of greedy heuristics and let $\{u_\rho(x)\}$ be its corresponding family of utility functions.
The pseudo-dimension of $\{u_\rho(x)\}$ is $O(\log(\kappa \beta n))$, where $n$ is the number of objects.
\end{theorem}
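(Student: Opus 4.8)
The plan is to follow the template of the proof of Theorem~\ref{knapsack}: fix an arbitrary instance $x$, show that the dual utility function $u_x(\rho)$ is piecewise constant with only polynomially (in $\kappa,\beta,n$) many pieces, and then invoke Lemma~\ref{lem:piecewiseconstant}.

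First I would set up the critical values. Fix an instance $x$ on $n$ objects. Since the assignment rule is $\beta$-bounded, for each object $i$ there is a set $\Xi_i$ with $|\Xi_i|\le\beta$ such that in every execution of the heuristic — for every value of $\rho\in I$ and at every step — the current attributes of $i$ lie in $\Xi_i$. Call a pair $(i,\xi)$ with $\xi\in\Xi_i$ a \emph{state}; there are at most $\beta n$ of them, and hence at most $\binom{\beta n}{2}$ unordered pairs of states whose attribute values are distinct. For each such pair of distinct attribute values $\xi',\xi''$, the $\kappa$-crossing property guarantees at most $\kappa$ values of $\rho\in I$ with $\sigma(\rho,\xi')=\sigma(\rho,\xi'')$. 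Let $C\subseteq I$ be the union of all these values; then $|C|\le\kappa\binom{\beta n}{2}=O(\kappa\beta^2 n^2)$, and sorting $C$ partitions $I$ into at most $O(\kappa\beta^2 n^2)$ open subintervals.

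Next I would argue that on each such open subinterval $J$ the entire trajectory of the heuristic — the sequence of (object, assigned value) decisions, together with all attribute updates — is constant, so in particular $u_x(\rho)$ is constant on $J$. This goes by induction on the number $t$ of objects already assigned; the base case $t=0$ is trivial. Assuming the first $t$ decisions agree for all $\rho\in J$, the current attributes of every unassigned object are a deterministic function of those decisions and hence the same for all $\rho\in J$; let $S$ be the common set of unassigned objects with their current attributes $\xi_i$. For $i,i'\in S$ with $\xi_i\neq\xi_{i'}$, the continuous function $\sigma(\rho,\xi_i)-\sigma(\rho,\xi_{i'})$ has no zero in $J$ (any zero would lie in $C$), so its sign is constant on $J$; for $i,i'$ with $\xi_i=\xi_{i'}$ the scores coincide throughout $J$ and the tie is resolved lexicographically. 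Thus the object of highest score (with lexicographic tie-breaking) is the same for all $\rho\in J$, and since the assignment rule is a deterministic function of the current attributes, decision $t+1$ and its attribute updates are identical for all $\rho\in J$. Induction completes the claim, so $u_x(\rho)$ is piecewise constant with $N=O(\kappa\beta^2 n^2)$ pieces, and Lemma~\ref{lem:piecewiseconstant} yields pseudo-dimension $O(\log N)=O(\log\kappa+\log\beta+\log n)=O(\log(\kappa\beta n))$.

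The step I expect to be the crux is the inductive claim that the execution is constant on each subinterval. Three points need care: that the set of reachable attribute values is bounded uniformly over all $\rho$ (this is exactly what $\beta$-boundedness provides, and is what lets us fix the set $C$ of critical values before choosing $\rho$); that lexicographic tie-breaking makes the selected object well-defined even when two scores coincide identically on $J$; and that no pair of \emph{distinct} attribute values can have equal scores on an entire subinterval (ruled out by the $\kappa$-crossing property, since that would force infinitely many crossings). Once these are pinned down, the counting and the appeal to Lemma~\ref{lem:piecewiseconstant} are routine.
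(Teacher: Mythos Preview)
Your proposal is correct and follows essentially the same approach as the paper: bound the number of distinct reachable attribute values by $n\beta$, use the $\kappa$-crossing property to get at most $O(\kappa(n\beta)^2)$ critical parameter values, argue the greedy trajectory is constant on each resulting subinterval, and apply Lemma~\ref{lem:piecewiseconstant}. Your inductive argument and the three care points you flag (uniform $\beta$-boundedness over all $\rho$, lexicographic tie-breaking, and the impossibility of identically equal scores for distinct attributes under $\kappa$-crossing) make explicit details the paper leaves implicit, but the structure is the same.
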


\begin{proof}
Fix an instance $x$, and consider the behavior of the algorithm as we vary $\rho$.   Because there are $n$ items and the assignment rule is $\beta$-bounded, there are a total of at most $n\beta$ distinct attribute values possible over all choices of $\rho$.  For any two such attribute values $\xi',\xi''$, we know by the $\kappa$-crossing assumption there are at most $\kappa$ distinct {\em critical values} $\crit$ such that  $\sigma(\crit,\xi') = \sigma(\crit,\xi'')$.  Thus, there are at most $(n\beta)^2\kappa$ distinct critical values total.  Now, between any two consecutive critical values, the algorithm must behave identically for all  $\rho$ in that interval.  In particular, if $\rho_1$ and $\rho_2$ behave differently on $x$, there must exist two attribute values $\xi', \xi''$ such that one has higher score under $\rho_1$ but the other has higher score under $\rho_2$, and by continuity of $\sigma$ this means $\rho_1$ and $\rho_2$ must be separated by a critical value.  Since the algorithm behaves identically in each interval and there are at most  $(n\beta)^2 \kappa +1$ intervals, this means that  that each function $u_x(\rho)$ is piecewise constant with at most $(n\beta)^2 \kappa +1$ pieces.  The theorem then follows from Lemma \ref{lem:piecewiseconstant}.
\end{proof}

\subsection{Clustering problems}\label{sec:clusteringdistributional}
In this section we discuss how a data-driven approach can help overcome impossibility results for clustering problems.
Clustering is one of the most basic problems in  data science; given a large set of complex data (e.g., images or news articles) the goal is to organize it into groups of similar items.
Despite significant efforts from  different communities, it remains a major challenge.
Traditional approaches  have focused on the ``one shot'' setting, where the goal is to cluster a single potentially worst-case dataset.
Unfortunately, there are major impossibility results for such scenarios; first, in most applications it is not clear  what objective function to use in order to recover a good clustering for the given data set; second, even in cases where the objective can be naturally specified, optimally solving the underlying combinatorial clustering problem is typically intractable.
 One approach to circumvent hardness of worst case instances 
 is to posit specific stability assumption about the input instances, and to design efficient algorithms with good performance on such instances.
 Another approach that is particularly suited  for settings  (including text and image categorization) where we  have to solve  many clustering problems arising in a given application domain, is to select  a good clustering algorithm in a data-driven way.
 In particular, given a series of clustering instances to be solved from the same domain,  we  learn a  good
parameter setting for a clustering algorithm (from a large potentially infinite set of clustering algorithms)
that performs well on instances coming from that domain. We can then use   the general framework discussed in Section~\ref{se:dislearn} to provide
guarantees for this approach. We discuss below such guarantees  for several parametric families of clustering procedures widely used in practice.

\paragraph{Problem Setup}
The results we  present apply both to objective based clustering (e.g., k-means and k-median) and to an unsupervised learning formulation of the problem.   In both cases the input to a clustering problem is a point set $V$ of $n$ points, a desired number of clusters $k \in \{1, \ldots, n\}$, and a metric $d$ (such as Euclidean distance in $\mathbb{R}^d$) specifying the distance between any two points; throughout the rest of this section we denote by
$d(i,j)$  the distance between points $i$ and $j$.

For objective based clustering, the goal is to output
a partition $\mathcal{C}=\{C_1,\dots,C_k\}$ of $V$ that optimizes a specific objective function.
For example, in the k-means clustering objective the goal is to output
a partition $\mathcal{C}=\{C_1,\dots,C_k\}$
and a center $c_i$ for each $C_i$
in order to minimize  the  sum of the squared distances between every point and its nearest center, i.e.
$\text{cost}(\mathcal{C})=\left(\sum_i\sum_{v\in C_i}d(v,c_i)^2\right)$, while in the $k$-median objective the goal is to
minimize the sum of distances to the centers rather than the squared distances, i.e. $\text{cost}(\mathcal{C})=\left(\sum_i\sum_{v\in C_i}d(v,c_i)\right)$.
Unfortunately, finding the clustering that minimizes  these objectives (and other classic ones such as k-center and min-sum) is NP-hard, so using a data-driven approach can  help in identifying solutions with good objective values for specific domains.

In the unsupervised learning or ``matching the ground-truth clustering'' approach, we assume that for each instance  $V$ of $n$ points, in addition to the distance metric $d$,  there is a ground-truth partition of the input points $\mathcal{C^*}=\{C_1^*,\dots,C_k^*\}$. The goal is to output a partition $\mathcal{C}=\{C_1,\dots,C_k\}$ in order to minimize some loss function relative to the ground-truth; e.g., a common loss function 
is the fraction of points that would have to be reassigned in $\mathcal{C}$ to make it match $\mathcal{C^*}$ up to re-indexing of the clusters, or equivalently $\min_{\sigma} \frac{1}{n} \sum_{i=1}^k |C_i \setminus C^*_{\sigma(i)}|,$
where the minimum is taken over all bijections $\sigma: \{1,\ldots,k\} \rightarrow \{1,\ldots,k\}$. For the data-driven approach we assume that the ground-truth is known for the training instances, but  it is unknown and what we want to predict for the test instances.

\paragraph{Linkage-based families}
 In the following we discuss families of two stage clustering algorithms, that in the first stage use a linkage procedure to organize data into a hierarchical clustering and then in a second stage use a fixed (computationally efficient) procedure to extract a pruning from this hierarchy. Such techniques are prevalent in practice and from a theoretical point of view, they are known to perform nearly optimally in  settings where the data is well-clusterable, in particular perturbation resilient and approximation stable. 

The linkage procedure in the first step takes as input a clustering instance $x$ (a set  $V$ of $n$ points  and metric $d$ specifying the distance between any pair
  of the base points)
and outputs a
cluster tree, by repeatedly merging the two closest clusters.
In particular, starting with the base distance $d$, we first define a distance measure $D(A,B)$ between any two subsets $A$ and $B$  of $\{1,\ldots,n\}$, that is used to greedily link the data into a binary cluster tree.  The leaves of the tree are the individual data points, while the root
node corresponds to the entire dataset.
The algorithm starts with each
point belonging to its own cluster. Then, it repeatedly merges the closest pair
of clusters according to  distance  $D$. 
When there is only a single cluster remaining, the algorithm
outputs the constructed cluster tree.  Different definitions for  $D$ lead to different hierarchical procedures. For example, the classic linkage procedures single,  complete, and average linkage define $D$ as  $D(A,B)= \dmin(A,B)= \min_{a \in A, b \in B} d(a,b)$,  $D(A,B) = \dmax(A,B)= \max_{a \in A, b
\in B} d(a,b)$, and $D(A,B)  =  \frac{1}{|A||B|}\sum_{u \in A, v \in B} d(u, v)$,
respectively.

The procedure in the second step can be as simple as just ``undoing'' the last $k-1$ merges from the first step  or a dynamic programming subroutine over the hierarchy from the first step to extract a clustering of highest score based on some measurable objective such as $k$-means or $k$-median cost.
The final  quality or utility (measured by the function $u_\rho(x)$ on clustering instance $x$) of the solution produced by the algorithm is for the objective based approach measured by the  given objective function (e.g. $k$-means or $k$-median objective) or the loss with respect to the ground truth in the unsupervised learning formulation.


We analyze below the pseudo-dimension of two parametric families of algorithms of this form (from~\cite{BNVW17}).
  Both of these families use a parametrized linkage procedure in the first step, and
the cluster tree produced  is then fed into a fixed second-stage procedure to produce a $k$-clustering.
 The first family $\algfamily^{scl}$ uses a parametrized family of linkage  algorithms with a single parameter $\rho \in \configs=[0,1]$  that helps interpolate linearly between the classic single and complete linkage procedures. For  $\rho \in \configs$ the algorithm  $A_\rho \in \algfamily^{scl}$ defines the distance between two sets $A$ and $B$ as $$D_{\rho}^{scl}(A, B) = (1-\rho) \dmin(A,
B) + \rho \dmax(A,B).$$  Note that $\rho = 0$ and
$\rho = 1$ recover single and complete linkage, respectively.

The second family $\algfamily^{exp}$ uses a parametrized family of linkage  algorithms with a single parameter $\rho \in \configs=\reals$  that helps interpolate not only between single and complete linkage but also includes average linkage as well. For  $\rho \in \configs$ the algorithm
$A_\rho \in \algfamily^{exp}$ defines the distance between two sets $A$ and $B$ as
$$
D_{\rho}^{exp}(A, B)  =  \left(\frac{1}{|A||B|}\sum_{u \in A, v \in B} \left(d(u, v)\right)^{\rho}\right)^{1/\rho}.
$$
Note that $\rho=0$ recovers average linkage, $\rho \rightarrow \infty$  recovers complete linkage, and $\rho \rightarrow -\infty$ recovers single linkage.
~\cite{BNVW17} prove that the family of functions $\{u_\rho(x)\}$ corresponding to the family $\algfamily^{scl}$ is not too complex, in the sense that it has pseudo-dimension $\Theta(\log n)$, where $n$ is an upper bound on the number of data points in a clustering instance. Similarly, the family of functions $\{u_\rho(x)\}$ corresponding to the family $\algfamily^{exp}$ has pseudo-dimension $\Theta(n)$. We sketch the upper bounds below.

 We start by analyzing the family $D_{\rho}^{scl}$-linkage, for which we can prove the following structural result.

\begin{lemma}\label{lem:single}
Let $x$ be a clustering instance.  We can partition $\configs$ into at most $n^8$ intervals such that all values of $\rho$ inside the same interval result in the exact same solution produced by the $D_{\rho}^{scl}$-linkage algorithm.
\end{lemma}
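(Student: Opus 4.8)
The plan is to track how the execution of the $D_\rho^{scl}$-linkage algorithm on the fixed instance $x$ can change as $\rho$ sweeps over $\configs = [0,1]$, exploiting that for any pair of clusters the merge distance is an \emph{affine} function of $\rho$, so two such distances can swap order at only a single value of $\rho$.

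First I would observe that for any two subsets $A, B \subseteq V$,
\[
D_\rho^{scl}(A,B) = (1-\rho)\,\dmin(A,B) + \rho\,\dmax(A,B) = \dmin(A,B) + \rho\bigl(\dmax(A,B) - \dmin(A,B)\bigr),
\]
which is affine in $\rho$; and since $\dmin(A,B)$ and $\dmax(A,B)$ are each equal to $d(p,q)$ for some pair of base points $p,q \in V$, the function $\rho \mapsto D_\rho^{scl}(A,B)$ always belongs to the fixed family
\[
\mathcal F = \bigl\{\, \rho \mapsto (1-\rho)\,d(p_1,p_2) + \rho\,d(p_3,p_4) \ :\ p_1,p_2,p_3,p_4 \in V \,\bigr\},
\]
of size $|\mathcal F| \le \binom{n}{2}^2$, \emph{independently of which clusters $A,B$ actually arise during the run}. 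I then call $\rho$ a \emph{critical value} if $f(\rho) = g(\rho)$ for some distinct $f,g \in \mathcal F$; since two distinct affine functions agree in at most one point, there are at most $\binom{|\mathcal F|}{2} \le \binom{\binom{n}{2}^2}{2}$ critical values, and a routine count shows these cut $[0,1]$ into at most $n^8$ intervals (the open intervals between consecutive critical values, together with the critical values themselves as degenerate intervals).

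It remains to show the algorithm behaves identically on each such interval $I$, which I would do by induction on the merge steps. On $I$ no two functions of $\mathcal F$ cross, so for any two candidate cluster pairs $(A,B)$ and $(A',B')$ the affine function $\rho \mapsto D_\rho^{scl}(A,B) - D_\rho^{scl}(A',B')$ has constant sign on $I$ (and if it is identically zero, the lexicographic tie-break, which does not depend on $\rho$, resolves the comparison the same way throughout $I$). Hence, if the first $t$ merges coincide for all $\rho \in I$, then the set of current clusters is the same, the pair minimizing $D_\rho^{scl}$ with lexicographic tie-breaking is the same, and so is the $(t+1)$-st merge; by induction the entire merge sequence --- equivalently the cluster tree --- is identical across $I$, and feeding it to the fixed second-stage pruning procedure yields the same clustering. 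Combining with the interval count gives the claimed partition into at most $n^8$ pieces.

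The one point that has to be handled with care --- and the reason this argument is not completely routine --- is the apparent circularity in the first paragraph: to list the critical values one is tempted to range over all pairs of clusters that could be merged, but that collection is $\rho$-dependent and can be exponentially large. The fix is precisely to note that the relevant objects are not the clusters but the merge-distance functions, which form the fixed polynomially-sized family $\mathcal F$; once that is in hand, both the crossing count and the inductive ``constant execution on an interval'' argument are straightforward. (One also uses $\rho \in [0,1]$ only to know $D_\rho^{scl}$ is a genuine convex combination of $\dmin$ and $\dmax$, so no degeneracies arise; the affineness, which is all the argument needs, holds for every real $\rho$.)
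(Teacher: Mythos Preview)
Your proof is correct and follows essentially the same approach as the paper: both arguments observe that $D_\rho^{scl}(A,B)$ is affine in $\rho$ and determined by two pairs of base points (those realizing $\dmin$ and $\dmax$), so every crossing is pinned down by an $8$-tuple of points, yielding at most $n^8$ critical values between which the merge order---and hence the entire execution---is fixed. Your version is a bit more explicit (isolating the family $\mathcal F$ to sidestep the circularity, running the induction on merge steps, and handling tie-breaking), but the core idea is identical to the paper's.
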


\begin{proof}
First, for any pair of candidate cluster merges $(C_1, C_2)$ and $(C'_1, C'_2)$, where
$C_1$, $C_2$, $C'_1$ and $C'_2$ are clusters, there is at most one critical parameter value
$\crit$ such that $D_{\rho}^{scl}(C_1, C_2) = D_{\rho}^{scl}(C'_1, C'_2)$ only when $\rho = \crit$.  In particular, $\crit = \deltamin / (\deltamin - \deltamax)$, where $\deltamin = \dmin(C'_1, C'_2) - \dmin(C_1, C_2)$ and $\deltamax = \dmax(C'_1, C'_2) - \dmax(C_1, C_2)$.  For clarity, we will call this value $\crit(C_1, C_2, C'_1, C'_2)$.

Next, the total number of distinct critical values $\crit$ ranging over all possible $4$-tuples of clusters $C_1, C_2, C'_1, C'_2$ is at most $n^8$.  The reason is that for any given  clusters $C_1, C_2, C'_1, C'_2$ there exist 8 points (not necessarily distinct) corresponding to the closest pair between $C_1$ and $C_2$, the closest pair between $C_1'$ and $C_2'$, the farthest pair between $C_1$ and $C_2$, and the farthest pair between $C_1'$ and $C_2'$, whose distances completely define $\crit(C_1, C_2, C'_1, C'_2)$.  Since there are at most $n^8$ possible 8-tuples of such points, this means there are at most $n^8$ distinct critical values.

Between any two consecutive critical values $\crit$, all $D_{\rho}^{scl}$-linkage algorithms give the same ordering on all possible merges. This is because for any  $C_1, C_2, C'_1, C'_2$, the function $f(\rho) = D_{\rho}^{scl}(C_1,C_2) - D_{\rho}^{scl}(C_1',C_2')$ is continuous, and therefore must have a zero (creating a critical value) if it switches sign.   So, there are at most $n^8$ intervals such that all values of $\rho$ inside the same interval result in the exact same merges, and therefore the same solution produced by the $D_{\rho}^{scl}$-linkage algorithm.
\end{proof}

Lemma~\ref{lem:single} and Lemma~\ref{lem:piecewiseconstant} imply the following:
\begin{theorem}
The family of functions $\{u_\rho(x)\}$ corresponding to the family $\algfamily^{scl}$-linkage has pseudo-dimension $O(\log n)$.
\end{theorem}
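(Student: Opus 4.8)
The plan is to combine the structural result of Lemma~\ref{lem:single} with the generic pseudo-dimension bound of Lemma~\ref{lem:piecewiseconstant}; essentially all the real work has already been done in Lemma~\ref{lem:single}, and what remains is to check that the two-stage nature of an algorithm in $\algfamily^{scl}$ still leaves $u_x(\rho)$ piecewise constant with few pieces.

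First I would fix an arbitrary clustering instance $x \in \Pbi$ and invoke Lemma~\ref{lem:single} to obtain a partition of $\configs = [0,1]$ into at most $n^8$ intervals such that every $\rho$ inside a given interval makes the $D_\rho^{scl}$-linkage step produce the identical cluster tree on $x$. Next I would observe that the second stage of the algorithm --- whether it simply ``undoes'' the last $k-1$ merges or runs a dynamic program extracting the best pruning for a fixed objective --- is a deterministic map from a cluster tree (together with the fixed instance data) to a $k$-clustering, and that the utility $u_\rho(x)$ is then a deterministic function of that clustering. Consequently, on each of the at most $n^8$ intervals the final clustering, and hence the value $u_\rho(x)$, is unchanged, so $u_x(\rho)$ is piecewise constant with at most $n^8$ pieces.

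Finally I would apply Lemma~\ref{lem:piecewiseconstant} with $N = n^8$, which yields $\pdim(\{u_\rho(x)\}) = O(\log n^8) = O(\log n)$, as claimed.

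The main obstacle is not in this theorem at all but was dispatched in Lemma~\ref{lem:single}: the crossing-plus-continuity argument showing that for each of the $O(n^8)$ four-tuples of candidate clusters there is at most one value of $\rho$ equalizing the two $D_\rho^{scl}$-distances, so that the relative order of all candidate merges changes only at one of at most $n^8$ critical values. Given that, the only point here that needs a sentence of care is the step from ``same cluster tree'' to ``same utility value'': one must note that the fixed second-stage procedure and the evaluation of the utility introduce no new dependence on $\rho$, so they cannot create additional breakpoints beyond the $n^8$ produced by the linkage stage.
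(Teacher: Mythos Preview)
Your proposal is correct and follows exactly the paper's own approach: the theorem is stated immediately after Lemma~\ref{lem:single} with the remark that it follows from Lemma~\ref{lem:single} together with Lemma~\ref{lem:piecewiseconstant}, and your write-up spells out precisely that combination. The extra sentence you include about the second-stage procedure introducing no new dependence on $\rho$ is a helpful clarification but not a departure from the paper's argument.
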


\begin{theorem}
The family of functions $\{u_\rho(x)\}$ corresponding to the family $\algfamily^{exp}$-linkage has pseudo-dimension $O(n)$.
\end{theorem}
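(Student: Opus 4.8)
The plan is to mirror the proof of the $O(\log n)$ bound for $\algfamily^{scl}$: show that for every clustering instance $x$, the dual function $u_x(\rho)$ is piecewise constant, this time with at most $2^{O(n)}$ pieces, and then invoke Lemma~\ref{lem:piecewiseconstant}, which turns a piecewise-constant bound of $N=2^{O(n)}$ pieces into a pseudo-dimension bound of $O(\log N)=O(n)$. The whole content is therefore in counting the parameter values at which the behavior of $D_{\rho}^{exp}$-linkage on $x$ can change.

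First I would set up the critical values. Fix an instance $x$ with point set $V$ of size $n$. For any two disjoint subsets $A,B\subseteq V$, the map $\rho\mapsto D_{\rho}^{exp}(A,B)$ is continuous on $\reals$ (extending continuously to the geometric mean at $\rho=0$), and $D_{\rho}^{exp}$-linkage makes every merge decision by comparing two such quantities. Hence, exactly as in Lemma~\ref{lem:single}, the output can change only at a value $\rho$ where two of these functions cross, i.e. $D_{\rho}^{exp}(C_1,C_2)=D_{\rho}^{exp}(C'_1,C'_2)$ for some $4$-tuple of clusters; between consecutive such critical values the relative order of all merge candidates is fixed (a change of order forces a crossing, by continuity), so the algorithm executes identically, builds the same cluster tree, and — since the second-stage extraction is a fixed procedure — outputs the same clustering. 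Thus $u_x(\rho)$ is piecewise constant with at most one more piece than the number of distinct critical values.

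Next I would bound, for a single $4$-tuple, the number of $\rho$ with $D_{\rho}^{exp}(C_1,C_2)=D_{\rho}^{exp}(C'_1,C'_2)$. Away from $\rho=0$ we may raise both sides to the $\rho$-th power and clear denominators, turning the equation into $|C'_1||C'_2|\sum_{u\in C_1,v\in C_2} d(u,v)^{\rho}=|C_1||C_2|\sum_{u\in C'_1,v\in C'_2} d(u,v)^{\rho}$; grouping equal distances, this has the form $\sum_{\ell} c_\ell\,\lambda_\ell^{\rho}=0$, where the $\lambda_\ell$ range over the at most $\binom{n}{2}$ distinct pairwise distances in $x$. A generalized Descartes/Rolle argument (divide by one of the exponentials and differentiate, killing the constant term and reducing the number of terms by one; then induct) shows that such an exponential sum, unless identically zero, has at most $\binom{n}{2}-1=O(n^2)$ real zeros — and if it is identically zero the two merges are permanently tied and resolved consistently by the lexicographic rule, contributing no critical value. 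The point $\rho=0$ adds at most one more value per tuple.

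Finally I would tally: over all $\rho$, every cluster that ever arises is a subset of $V$, so there are at most $2^n$ of them and at most $2^{4n}$ relevant $4$-tuples; multiplying by the $O(n^2)$ critical values per tuple gives at most $2^{4n}\cdot O(n^2)=2^{O(n)}$ critical values, hence $N=2^{O(n)}$ pieces, and Lemma~\ref{lem:piecewiseconstant} yields pseudo-dimension $O(\log N)=O(n)$. The main obstacle, and the only place the argument genuinely departs from the $\algfamily^{scl}$ case, is the zero-count for the exponential sum $\sum_\ell c_\ell\lambda_\ell^{\rho}$: in the single-vs-complete family the comparison function is linear in $\rho$, giving a single critical value per tuple, whereas here one must control a sum of $\Theta(n^2)$ incommensurable exponentials $\lambda_\ell^{\rho}$, which is exactly what forces the weaker $O(n)$ (rather than $O(\log n)$) bound — and, as the matching $\Omega(n)$ lower bound shows, this cannot be improved.
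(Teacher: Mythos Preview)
Your proposal is correct and follows essentially the same route as the paper: fix an instance, argue that the comparison $D_\rho^{exp}(A,B)$ versus $D_\rho^{exp}(X,Y)$ reduces (after raising to the $\rho$-th power) to the sign of an exponential sum with $O(n^2)$ terms, invoke Rolle's theorem to bound its zeros by $O(n^2)$, multiply by the exponential number of candidate cluster pairs to get $2^{O(n)}$ pieces, and finish with Lemma~\ref{lem:piecewiseconstant}. The only cosmetic differences are that the paper enumerates pairs of disjoint clusters as $O((3^n)^2)$ rather than your cruder $2^{4n}$, and you are a bit more explicit about the $\rho=0$ and identically-zero corner cases; neither affects the argument or the final bound.
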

\begin{proof} (Sketch)
As in the proof of Lemma \ref{lem:single}, we fix an instance $x$ and bound the number of  intervals such that all values of $\rho$ inside the same interval result in the exact same solution produced by the algorithm.

Fixing an instance $x$, consider two pairs of sets $A,B$ and $X,Y$ that could be potentially merged.  Now, the decision to merge one pair before the other is determined by the sign of the expression
$$ \frac{1}{|A||B|}\sum_{p \in A, q \in B} (d(p,q))^\rho - \frac{1}{|X||Y|}\sum_{x \in X, y \in Y} (d(x,y))^\rho.$$ First note that this expression has $O(n^2)$ terms, and
by a consequence of Rolle's Theorem, it has $O(n^2)$ roots. Therefore, as we iterate over the $O\left(\left(3^{n}\right)^2\right)$ possible pairs $(A,B)$ and $(X,Y)$, we can determine $O\left(3^{2n}\right)$ unique expressions each with $O(n^2)$ values of $\rho$ at which the corresponding decision flips. Thus, by continuity of the associated functions, we can divide $\R$ into at most $O\left(n^2 3^{2n}\right)$ intervals over each of which the output of the algorithm on input $x$ is fixed.

Finally, we apply Lemma \ref{lem:piecewiseconstant} using the fact that each function $u_x(\rho)$ is piecewise constant with at most $2^{O(n)}$ pieces.
\end{proof}

Interestingly, these  families of clustering algorithms are also known to have strong analytical properties for  stable instances.
One such condition, called perturbation-resilience, asks that even if distances between data points are perturbed by up to some factor $\beta$,  the clustering that optimizes a given objective (such as $k$-means or $k$-median) does not change.  If this condition is satisfied for $\beta \geq 2$, it is known that one can find the optimal clustering efficiently, in fact via a linkage algorithm followed by dynamic programming, 
further motivating that algorithm family.
However, one drawback of all these results is that if the condition does not hold, the guarantees do not apply.  Here, we aim to provide guarantees on optimality within an algorithm family that hold regardless of clusterability assumptions, but with the additional property that if typical instances are indeed well-clusterable (e.g., they satisfy perturbation-resilience or some related condition), then the optimal algorithm in the family is optimal overall.  This way, we can produce guarantees that simultaneously are meaningful in the general case and can take advantage of settings where the data is particularly well behaved.

\paragraph{Parametrized Lloyd's methods}
The Lloyd's method is another popular technique in practice. The procedure starts with $k$ initial centers and iteratively makes incremental improvements until a local optimum is reached. One of the most crucial decisions an algorithm designer must make when using such an algorithm is the initial seeding step, i.e.,  how the algorithm chooses the $k$ initial centers.
~\cite{BDW18} consider an infinite family of algorithms generalizing
the popular $k$-means++ approach \citep{arthur2007k}, with a parameter $\alpha$ that controls the seeding process.
In the seeding phase, each point $v$ is sampled with probability proportional to
$d_{\text{min}}(v,C)^\alpha$, where $C$ is the set of centers chosen so far and $d_{\text{min}}(v,C) = \min_{c \in C}d(v,c)$.
Then Lloyd's method is used to converge to a local minimum or is cut off at some given time bound.  By ranging over $\alpha\in [0,\infty)\cup\{\infty\}$, we obtain an infinite family of algorithms which we call $\alpha$-Lloyds++.
This allows a spectrum between random seeding ($\alpha=0$),
and farthest-first traversal ($\alpha=\infty$), with  $\alpha=2$ corresponding to $k$-means++.
 What is different about this algorithm family compared to those studied earlier in this chapter is that because the algorithm is randomized, for this problem the expected cost as a function of $\alpha$ is Lipschitz.  In particular, one can prove a Lipschitz constant of $O(nkH\log R)$, where $R$ is the ratio of maximum to minimum pairwise distance, and $H$ is an upper bound on the $k$-means cost of any clustering. As a consequence, one can discretize values of $\alpha$ into a fine grid, and then try $N=O(\alpha_h nkH (\log R)/\epsilon)$ values of $\alpha$ on a sample of size $O((H/\epsilon)^2\log N)$ and pick the best, where $\alpha_h$ is the largest value of $\alpha$ one wishes to consider.  However, by pushing the randomness of the algorithm into the problem instance (augmenting each problem instance with a random string and viewing the algorithm as a deterministic function of the instance and random string), one can view $u_x(\alpha)$ as a piecewise-constant function with a number of pieces that {\em in expectation} is only $O(nk(\log n)\log (\alpha_h \log R))$.  This allows for many fewer values of $\alpha$ to be tried, making this approach  more practical. In fact, \cite{BDW18} implement this approach and demonstrate it on several interesting datasets.

\subsection{Other Applications and Generic Results}
\label{generic}

\paragraph{Partitioning problems via IQPs}
~\cite{BNVW17} study data-driven algorithm design for problems that can be written as integer quadratic programs (IQPs)
for families of algorithms that involve semidefinite programming (SDP) relaxations followed by parametrized rounding schemes.
The class of IQP problems they consider is described as follows.  An instance $x$
is specified by a matrix $A \in \R^{n \times n}$, and the goal is to solve (at least approximately) the  optimization problem $\max_{\vec{z} \in \{\pm 1\}^n} \vec{z}^\top A \vec{z}$.
This is of interest since many classic NP-hard
problems can be formulated as IQPs, including max-cut,
 max-2SAT, and correlation clustering.
For example, the classic max-cut problem can be written as an IQP of this form.  Recall that given a graph $G$ on $n$ nodes with edge weights $w_{ij}$, the max-cut problem is to find a partition of the vertices into two sides to maximize the total sum of edge weights crossing the partition.  This can be written as solving for $\vec{z} \in \{\pm 1\}^n$ to maximize $\sum_{(i,j)\in E} w_{ij}\left(\frac{1-z_i z_j}{2}\right)$, where $z_i$ represents which side vertex $i$ is assigned to.
This objective can be formulated as $\max_{\vec{z} \in \{\pm 1\}^n} \vec{z}^\top A \vec{z}$ for $a_{ij}=-w_{ij}/2$ for $(i,j)\in E$ and $a_{ij}=0$ for $(i,j)\not\in E$.

The  family of algorithms $A_{\rho}^{round}$ that~\citet{BNVW17} analyze is parametrized by a one dimensional set $\configs = \reals$, and for any $\rho \in \configs$ the algorithm $A_{\rho}$ operates as follows. In the first stage it solves the SDP relaxation
$\sum_{i,j \in [n]} a_{ij} \langle \vec{u}_i, \vec{u}_j \rangle$ subject to the
constraint that $\norm{\vec{u}_i} = 1$ for $i \in \{1, 2, \ldots, n\}$. 
In the second stage it rounds the vectors $\vec{u}_i$ to $\left\{\pm 1\right\}$ by
sampling a standard Gaussian $\vec{Z} \sim \mathcal{N}_n$ and setting $z_i = 1$ with
probability $1/2 + \phi_{\rho}\left(\langle \vec{u}_i, \vec{Z}
\rangle\right)/2$ and $-1$ otherwise, where
$\phi_{\rho}(y) = y/\rho$ for $-\rho \leq y \leq \rho$, $\phi_{\rho}(y)=-1$ for $y < -\rho$, and $\phi_{\rho}(y)=1$ for $y>\rho$.
In other words, if $|\langle \vec{u}_i, \vec{Z}\rangle|> \rho$ then $\vec{u}_i$ is rounded based on the sign of the dot-product, else it is rounded probabilistically using a linear scale.
The utility function $u_{\rho}(x)$ for algorithm $A_{\rho}$ maps the algorithm parameter  $\rho$
to the expected objective value obtained on the instance $x$. Note that by design the algorithms $A_{\rho}^{round}$ are polynomial time algorithms.

Note that when $\rho=0$, this algorithm corresponds to the classic Goemans-Williamson max-cut
algorithm.  It is known that nonzero values of $\rho$ can
 outperform the classic algorithm on graphs for which the max cut does
 not constitute a large fraction of the edges.

\begin{theorem}
Let $\{u_\rho(x)\}$ be the corresponding family of utility functions for the family of algorithms $A_{\rho}^{round}$.
The pseudo-dimension of $\{u_\rho(x)\}$ is $O(\log(n))$, where $n$ is the maximum number of variables in an instance.
\end{theorem}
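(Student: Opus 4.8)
The utility $u_\rho(x)$ here is genuinely different from the utilities analyzed earlier in the chapter: it is defined as an \emph{expectation} over the algorithm's internal randomness (the Gaussian $\vec{Z}\sim\mathcal{N}_n$ and the independent coins used to round each coordinate), so for a fixed instance $x$ the map $\rho\mapsto u_\rho(x)$ is a smooth, transcendental function of $\rho$ rather than a piecewise-constant one, and Lemma~\ref{lem:piecewiseconstant} does not apply to it directly. The plan is to use the ``push the randomness into the instance'' strategy described for $\alpha$-Lloyds++: augment each instance $x$ with a draw of the Gaussian vector $\vec{Z}\in\R^n$ together with a vector of rounding thresholds $\vec{\tau}\in[0,1]^n$ (so that the rounded bit of coordinate $i$ is $z_i=1$ iff $\tau_i<\tfrac12+\tfrac12\phi_\rho(\langle \vec{u}_i,\vec{Z}\rangle)$), obtaining a \emph{deterministic} utility $\tilde u_\rho(x,\vec{Z},\vec{\tau})=\vec{z}^\top A\vec{z}$ with $u_\rho(x)=\E_{\vec{Z},\vec{\tau}}[\tilde u_\rho(x,\vec{Z},\vec{\tau})]$. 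I would bound the pseudo-dimension of the augmented family $\{\tilde u_\rho\}$ via Lemma~\ref{lem:piecewiseconstant} and then transfer the conclusion back to $\{u_\rho\}$.

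The key structural step is to show that, for every fixed augmented instance, $\tilde u_\rho$ is piecewise constant in $\rho$ with only $O(n)$ pieces. The crucial observation is that the first-stage SDP relaxation does not depend on $\rho$, so the unit vectors $\vec{u}_1,\dots,\vec{u}_n$, and hence the projections $y_i=\langle \vec{u}_i,\vec{Z}\rangle$, are fixed real numbers once $x$ and $\vec{Z}$ are fixed. For each coordinate $i$, consider $\rho\mapsto\phi_\rho(y_i)$: by definition it equals $\operatorname{sign}(y_i)$ for $\rho\le|y_i|$ and equals $y_i/\rho$ (continuous, monotone in $\rho$, bounded by $1$ in absolute value) for $\rho\ge|y_i|$, so it is continuous and is constant on $(0,|y_i|]$ and monotone on $[|y_i|,\infty)$. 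Consequently the event $\{\tau_i<\tfrac12+\tfrac12\phi_\rho(y_i)\}$ that determines $z_i$ changes its truth value at most a constant number of times (in fact at most once) as $\rho$ ranges over $\configs$; thus each coordinate function $\rho\mapsto z_i(\rho)\in\{\pm1\}$ is a step function with $O(1)$ breakpoints. Taking the union of these breakpoints over the $n$ coordinates partitions $\configs$ into at most $O(n)$ intervals, and on each interval the entire rounded vector $\vec{z}(\rho)$, and therefore the objective value $\vec{z}^\top A\vec{z}$, is constant. Hence $\tilde u_x(\rho)$ is piecewise constant with $O(n)$ pieces for every fixed $x,\vec{Z},\vec{\tau}$.

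Applying Lemma~\ref{lem:piecewiseconstant} with $N=O(n)$ then gives that the augmented family $\{\tilde u_\rho\}$ has pseudo-dimension $O(\log n)$. It remains to transfer this to $\{u_\rho(x)\}$: for each $\rho$ the utility $u_\rho$ is a convex combination (an expectation) of the deterministic utilities $\tilde u_\rho(\cdot,\omega)$, so it inherits the learning-theoretic complexity of the augmented family---equivalently, running the randomized algorithm on $m$ i.i.d.\ instances realizes $m$ i.i.d.\ samples of the augmented instance, and uniform convergence for the $O(\log n)$-pseudo-dimension augmented class (Theorem~\ref{thm:pdim}) yields the stated guarantee, which is how the pseudo-dimension bound should be read. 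I expect the main obstacle to be exactly these two points: first, pinning down cleanly, from the piecewise definition of $\phi_\rho$, that each $z_i(\rho)$ really flips only $O(1)$ times---including the degenerate subcases $y_i=0$, $|y_i|$ large, and $\tau_i$ near $\tfrac12$; and second, arguing rigorously that passing from the augmented class to the expected utility $u_\rho$ does not inflate the relevant complexity. The rest (the count of pieces, and invoking Lemma~\ref{lem:piecewiseconstant}) is routine once these are in place.
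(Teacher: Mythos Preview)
Your proposal is correct and follows the same high-level ``push the randomness into the instance'' strategy as the paper, but you make a different technical choice about \emph{how much} randomness to push. The paper augments the instance only with the Gaussian $\vec{Z}$, keeping the expectation over the rounding coins; this yields the closed-form conditional expectation
\[
u_\rho(x,\vec{Z}) \;=\; \sum_i a_{ii} \;+\; \sum_{i\neq j} a_{ij}\,\phi_\rho(v_i)\,\phi_\rho(v_j),\qquad v_i=\langle \vec{u}_i,\vec{Z}\rangle,
\]
which is piecewise \emph{quadratic} in $1/\rho$ with at most $n$ breakpoints (at $\rho=|v_1|,\dots,|v_n|$), and then invokes a piecewise-polynomial generalization of Lemma~\ref{lem:piecewiseconstant}. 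You instead also push the rounding thresholds $\vec{\tau}$ into the instance, obtaining a fully deterministic utility that is piecewise \emph{constant} with $O(n)$ pieces, so that Lemma~\ref{lem:piecewiseconstant} applies as stated. Your monotonicity argument for $\rho\mapsto\phi_\rho(y_i)$ and the resulting single flip of each $z_i(\rho)$ is clean and correct.

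What each buys: your route avoids the need for any extension of Lemma~\ref{lem:piecewiseconstant} to piecewise-polynomial pieces, at the cost of augmenting with more randomness; the paper's route keeps the augmented instance smaller and gives an explicit algebraic form for the (partially averaged) utility, but needs the stronger lemma. The ``transfer back to $u_\rho$'' issue you flag is present in both approaches in exactly the same way---the paper also bounds the pseudo-dimension of the augmented class and reads the conclusion as the learning guarantee for $u_\rho$---so you are not incurring any extra obligation there.
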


At a high level, the proof idea is to analyze a related utility function where we imagine that the
Gaussians $\vec{Z}$ are sampled ahead of time and included as part of the
problem instance; in other words we augment the instance to obtain a new instance $\tilde{x}=(x,\vec{Z})$. One can then prove that the utility is
$u_{\rho}(\tilde{x}) = \sum_{i = 1}^n a_{ii}^2 + \sum_{i \not= j} a_{ij} \phi_s(v_i) \phi_s(v_j)$, where $v_i =
\langle \vec{u_i}, \vec{Z} \rangle$. Using this form, it is easy to show that this objective function  value is piecewise quadratic in $1/\rho$ with $n$ boundaries. The result then follows from a generalization of Lemma~\ref{lem:piecewiseconstant}.

\paragraph{Learning to branch}
So far we considered families of polynomial time algorithms and scored them based on solution quality (e.g., clustering quality or objective value).
 In general, one could also score algorithms based on other important measures of performance. For example,~\cite{BDSV18} consider parametrized branch-and-bound techniques for learning how to branch when solving mixed integer programs (MIPs) in the distributional learning setting, and score a parameter setting  based on the tree size on a given instance (which roughly corresponds to running time). ~\cite{BDSV18} show that the corresponding dual functions are piecewise constant, and then the sample complexity results follow from a high-dimensional generalization of Lemma~\ref{lem:piecewiseconstant}.
 ~\cite{BDSV18}  also show experimentally  that different parameter settings of these families of algorithms can result in branch and bound trees of vastly different sizes, for different combinatorial problems (including  winner determination in combinatorial auctions, k-means clustering, and agnostic learning of linear separators). They also show that the optimal parameter is highly distribution-dependent: using a parameter optimized on the wrong distribution can lead to a dramatic tree size blowup, implying that learning to branch is both practical and hugely beneficial.

\paragraph{General bounds via structure of dual functions}
~\citet{BDDKSV19} present a general sample complexity result applicable to  algorithm configuration problems for which the dual functions are piece-wise structured. The key innovation is to provide an elegant and widely applicable abstraction that simultaneously covers all the types of dual structures appearing in the algorithm families mentioned so far -- this includes those in Sections~\ref{se:knapsack} and ~\ref{sec:clusteringdistributional} (where $u_x(\rho)$ are piecewise-constant with a limited number of pieces as in Lemma~\ref{lem:piecewiseconstant}), and the dual functions appearing in the context of learning to branch mentioned above, as well as  revenue maximization in multi-item multi-bidder settings. ~\citet{BDDKSV19}  show that this theorem recovers all the prior results and they also show new applications including to dynamic programming techniques for important problems in computational biology, e.g.,  sequence alignment and protein folding.

Recall that  $\params$ denotes the space of parameter vectors $\vec{\rho}$ (e.g., if $\vec{\rho}$ consists of $d$ real-valued parameters, then $\params = \R^d$).
Let ${\cal F}$ denote a family of {\em boundary functions} such as linear separators or quadratic separators that each partition $\params$ into two pieces, and let ${\cal G}$ denote a family of {\em simple utility functions} such as constant functions or linear functions over $\params$.  \citet{BDDKSV19}  show the following.  Suppose that for each dual function $u_x(\vec{\rho})$, there are a limited number of boundary functions $f_1, \ldots, f_N \in {\cal F}$ such that within each region\footnote{Formally, each $f_i$ is a function from $\params$ to $\{0,1\}$, and a {\em region} is a nonempty set of $\vec{\rho}$ that are all labeled the same way by each $f_i$.}  defined by these functions, $u_x(\vec{\rho})$ behaves as some function from ${\cal G}$.  Then, the pseudo-dimension of the primal family $\{u_{\vec{\rho}}(x)\}$ can be bounded as a function of $N$,  the  VC-dimension of the dual class ${\cal F}^*$ to ${\cal F}$, and the pseudo-dimension of the dual class ${\cal G}^*$ to ${\cal G}$.\footnote{${\cal F}^*$ is defined as follows: for each $\vec{\rho} \in \params$ define the function  $\vec{\rho}(f) = f(\vec{\rho})$ for all $f \in {\cal F}$. ${\cal G}^*$ is defined similarly. }

\section{Data-driven Algorithm Design via Online Learning}
\label{se:online}
We now consider an online formulation for algorithm design where we do not assume that the instances of the given algorithmic problem are i.i.d. and presented all at once; instead, they could arrive online, in an arbitrary order, in which case what we can aim for is to compete with the best fixed algorithm in hindsight~\citep{BDV18,GR17,CAK17}, also known as no-regret learning.   Since the utility functions appearing in algorithm selection settings often exhibit sharp discontinuities, achieving no-regret  is impossible in the worst case over the input sequence of instances.

We discuss a niceness  condition on the sequence of utility functions introduced in \cite{BDV18}, called dispersion, that is sufficient for the existence of online algorithms that guarantee no regret.

\paragraph{Problem Formulation}
 On each round $t$ the learner chooses an algorithm from the family specified by the  parameter vector $\vec{\rho}_t$ and receives a new instance of the problem $x_t$; this induces the utility function $u_{x_t}(\vec{\rho})$ that measures the performance of each algorithm in the family for the given instance, and the utility of the learner at time $t$ is $u_{x_t}(\vec{\rho}_t)$. The case where the learner observes the entire utility function $u_{x_t}(\vec{\rho})$ or can evaluate it at points of its own choice is called the full information setting;  the case where it only observes the scalar $u_{x_t}(\vec{\rho}_t)$ is called the bandit setting.
 The goal is to select algorithms so that the cumulative performance of the learner is nearly as good as the best algorithm in hindsight for that sequence of problems. Formally, the goal is to minimize expected regret
$$\E[\max_{\vec{\rho} \in \configs} \sum u_{x_t}(\vec{\rho}) - u_{x_t}(\vec{\rho}_t)],$$
 where the expectation is  over the randomness in the learner's choices or over the randomness in the utility functions.
We aim to obtain expected regret that is sublinear in $T$, since in that case  the per-round average performance of the algorithm is approaching that of the best parameter in hindsight -- this is commonly referred as achieving ``no regret" in the online learning literature.

 As we have seen in the previous sections the utility functions appearing in algorithm selection settings often exhibit sharp discontinuities, and it  is known that even for one-dimensional cases, achieving no-regret guarantees for learning functions with sharp discontinuities is impossible, in the worst case. In essence, the problem is that if $I$ is an interval of parameters that have all achieved maximum utility so far, an adversary can choose the next utility function to randomly give either the left or right half of $I$ a utility of 0 and the other half a maximum utility, causing any online algorithm to achieve only half  of the optimum in hindsight.
~\cite{GR17} show that this is the case for online algorithm selection for the maximum weighted independent set problem for the family of algorithms discussed in section~\ref{se:knapsack}.

We now describe a general condition on the sequence of utility functions introduced, called dispersion, that is provably sufficient to achieve no-regret, introduced in~\citet{BDV18}. Roughly speaking, a collection of utility  functions $u_{x_1}, \dots, u_{x_\numfunctions}$ is dispersed if no small region of the space contains discontinuities for many of these functions.
 Formally:

 \begin{definition}
  \label{def:dispersion}
  Let $u_{x_1}, \dots, u_{x_\numfunctions} : \configs \to [0,H]$ be a collection of utility functions where
  $u_{x_i}$ is piecewise Lipschitz over a partition $\partition_i$ of $\configs$. We
  say that $\partition_i$ splits a set $A$ if $A$ intersects with at least two
  sets in $\partition_i$. The collection of functions is
  \emph{$(w,k)$-dispersed} if every ball of radius $w$ is split by at most $k$
  of the partitions $\partition_1, \dots, \partition_\numfunctions$. More generally, the
  functions are \emph{$(w,k)$-dispersed at a maximizer} if there exists a
  point $\vec{\rho}^* \in \argmax_{\vec{\rho} \in \configs} \sum_{i = 1}^\numfunctions
  u_i(\vec{\rho})$ such that the ball $B(\vec{\rho}^*,w)$ is split by at most
  $k$ of the partitions $\partition_1, \dots, \partition_\numfunctions$.
\end{definition}

\begin{figure}[t]
\centering
\begin{subfigure}{.47\textwidth}
\includegraphics[width=\textwidth]{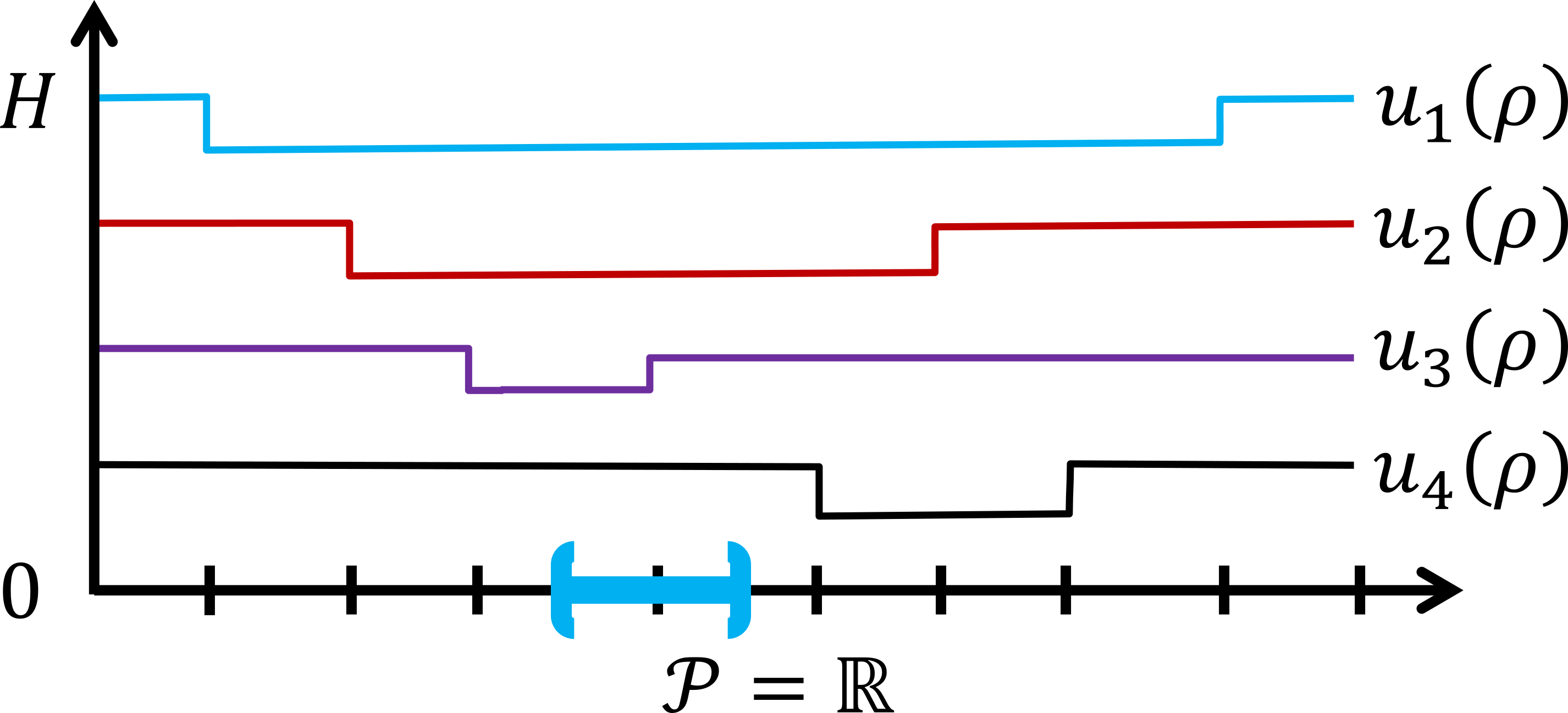}\centering
\caption{}\label{}
\end{subfigure}\qquad
\begin{subfigure}{.47\textwidth}
\includegraphics[width=\textwidth]{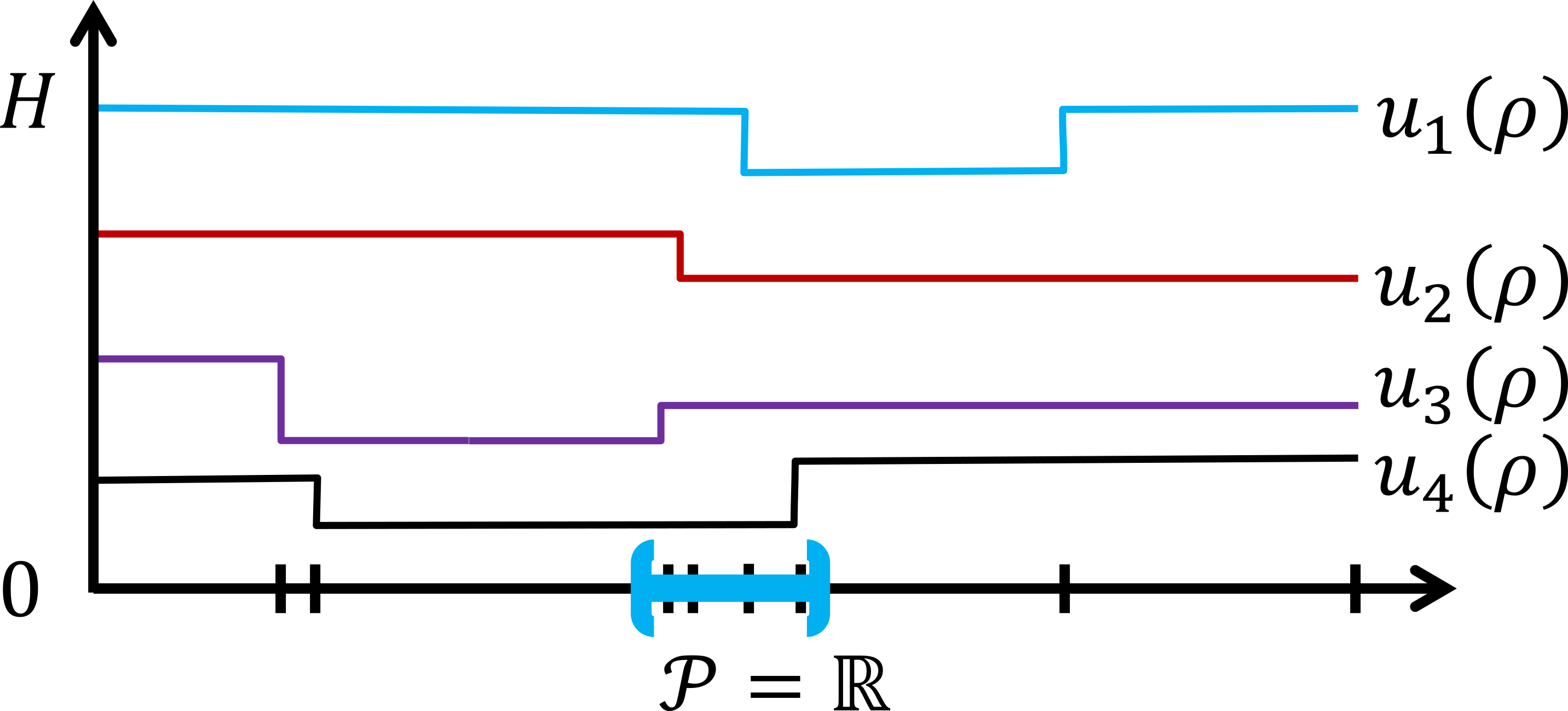}\centering
\caption{}\label{}
\end{subfigure}
\caption{The utility functions in figure (a) are dispersed because any small interval has discontinuities for only a few of them.  The utility functions in figure (b) are not dispersed because there is a small interval with many discontinuities.}
\label{}
\end{figure}


In many applications, Definition~\ref{def:dispersion}  holds with
 high probability for $w = \numfunctions^{\alpha - 1}$ and $k = \tilde O(\numfunctions^{\alpha})$ for some $1/2 \leq \alpha \leq 1$, ignoring  problem-specific multiplicands.

\paragraph{Continuous weighted majority}
In the full information setting, we can use a continuous version of the classic weighted majority algorithm~\citep{NicoloGabor06:PLG} to obtain no-regret learning in dispersed settings. In round $t$, the algorithm samples a vector
$\vec{\rho}_t$ from the distribution $$p_t(\vec{\rho}) \propto \exp(\lambda
\sum_{s = 1}^{t-1} u_s(\vec{\rho})).$$
The
 following bound holds for this algorithm~\citep{BDV18}.


\begin{theorem}\label{thm:1_d_online}
  Let $u_{x_1}, \dots, u_{x_{\numfunctions}} : \configs \to [0,H] $ be a sequence of utility functions
  corresponding to problem instances $x_1, \ldots x_\numfunctions$. Assume that these functions $u_{x_1}, \dots, u_{x_\numfunctions}$
  are piecewise
  $L$-Lipschitz functions and $(w,k)$-dispersed at the maximizer
  $\vec{\rho}^*$. Suppose $\configs \subset \reals^d$ is contained in
  a ball of radius $R$ and
  $B(\vec{\rho^*},w) \subset \configs$. The continuous weighted majority algorithm
  with $\lambda = \frac{1}{H}\sqrt{d\ln(R/w)/\numfunctions}$ has expected regret bounded by
  $O\left(H\left(\sqrt{\numfunctions d\log\frac{R}{w}} + k\right) + \numfunctions Lw\right).$
\end{theorem}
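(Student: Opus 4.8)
The plan is to run the classical exponential-weights (continuous weighted majority) potential argument, invoking the dispersion hypothesis only to lower-bound the performance of a small region of parameters around $\vec{\rho}^*$. Define the potential $W_t = \int_{\configs} \exp\!\big(\lambda \sum_{s=1}^{t-1} u_{x_s}(\vec{\rho})\big)\, d\vec{\rho}$, so that $W_1 = \vol(\configs)$ and the sampling density used by the algorithm is exactly $p_t(\vec{\rho}) = W_t^{-1}\exp\!\big(\lambda \sum_{s<t} u_{x_s}(\vec{\rho})\big)$. The entire proof then amounts to sandwiching $\ln(W_{\numfunctions+1}/W_1)$ between an upper bound in terms of the learner's expected cumulative utility and a lower bound in terms of $\OPT := \sum_{s=1}^{\numfunctions} u_{x_s}(\vec{\rho}^*)$.

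For the upper bound I would write $W_{t+1}/W_t = \E_{\vec{\rho}\sim p_t}[\exp(\lambda u_{x_t}(\vec{\rho}))]$ and apply Hoeffding's lemma to the $[0,H]$-valued random variable $u_{x_t}(\vec{\rho})$, obtaining $\ln(W_{t+1}/W_t) \le \lambda\,\E_{\vec{\rho}\sim p_t}[u_{x_t}(\vec{\rho})] + \lambda^2 H^2/8$. Summing over $t = 1, \dots, \numfunctions$, and using that the learner's expected utility on round $t$ equals $\E_{\vec{\rho}\sim p_t}[u_{x_t}(\vec{\rho})]$, gives $\ln(W_{\numfunctions+1}/W_1) \le \lambda\, \E\big[\sum_t u_{x_t}(\vec{\rho}_t)\big] + \numfunctions \lambda^2 H^2/8$.

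For the lower bound I would restrict the integral defining $W_{\numfunctions+1}$ to the ball $B(\vec{\rho}^*, w)$, which lies inside $\configs$ by hypothesis. The key estimate is that \emph{every} $\vec{\rho}\in B(\vec{\rho}^*,w)$ satisfies $\sum_{s=1}^{\numfunctions} u_{x_s}(\vec{\rho}) \ge \OPT - \numfunctions L w - kH$, and this is precisely where dispersion enters: by $(w,k)$-dispersion at $\vec{\rho}^*$, at most $k$ of the partitions $\partition_1,\dots,\partition_{\numfunctions}$ split $B(\vec{\rho}^*,w)$; for each of the (at least $\numfunctions - k$) rounds $s$ whose partition does not split the ball, $\vec{\rho}$ and $\vec{\rho}^*$ lie in a common piece of $\partition_s$, so piecewise $L$-Lipschitzness gives $u_{x_s}(\vec{\rho}) \ge u_{x_s}(\vec{\rho}^*) - Lw$; for the remaining $\le k$ rounds I would just use $u_{x_s}(\vec{\rho}) \ge 0 \ge u_{x_s}(\vec{\rho}^*) - H$. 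Hence $W_{\numfunctions+1} \ge \vol(B(\vec{\rho}^*,w))\exp\!\big(\lambda(\OPT - \numfunctions L w - kH)\big)$, and since $\configs$ is contained in a ball of radius $R$, comparing the volumes of two $d$-dimensional balls of radii $w$ and $R$ yields $\ln(W_{\numfunctions+1}/W_1) \ge -d\ln(R/w) + \lambda(\OPT - \numfunctions L w - kH)$.

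Combining the two bounds, the $\lambda\,\E[\sum_t u_{x_t}(\vec{\rho}_t)]$ terms move to one side to give expected regret $\OPT - \E\big[\sum_t u_{x_t}(\vec{\rho}_t)\big] \le \frac{d\ln(R/w)}{\lambda} + \numfunctions L w + kH + \frac{\numfunctions \lambda H^2}{8}$; substituting $\lambda = \frac{1}{H}\sqrt{d\ln(R/w)/\numfunctions}$ balances the first and last terms at $\Theta\big(H\sqrt{\numfunctions d\log(R/w)}\big)$ and yields the claimed $O\big(H(\sqrt{\numfunctions d\log(R/w)} + k) + \numfunctions L w\big)$. The one step that requires genuine care is the lower bound: one must verify that "$\partition_s$ does not split $B(\vec{\rho}^*,w)$" really does place \emph{all} of $B(\vec{\rho}^*,w)$ — in particular both $\vec{\rho}^*$ and an arbitrary $\vec{\rho}$ in the ball — inside a single Lipschitz piece, and that the at-most-$k$ exceptional rounds cost only $O(kH)$ in total; the remaining ingredients (Hoeffding's lemma, the ball-volume comparison, and the final optimization over $\lambda$) are routine.
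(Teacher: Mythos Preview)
Your proposal is correct and follows essentially the same potential-function argument as the paper: define $W_t$, sandwich $\ln(W_{T+1}/W_1)$, use $(w,k)$-dispersion plus piecewise Lipschitzness to get $U_{T+1}(\vec{\rho}) \ge \OPT - Hk - LTw$ on $B(\vec{\rho}^*,w)$, and compare ball volumes. The only cosmetic difference is the upper-bound step: the paper invokes the classic weighted-majority inequality $W_{T+1}/W_1 \le \exp\!\big(P(\mathcal{A})(e^{H\lambda}-1)/H\big)$, whereas you use Hoeffding's lemma to get the additive $\lambda^2 H^2/8$ term per round; both are standard and yield the same bound up to constants.
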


When $w = 1/\sqrt{\numfunctions}$ and $k = \tilde O(\sqrt{\numfunctions})$, this gives
 regret $\tilde O(\sqrt{\numfunctions}(H\sqrt{d}+L))$.

\begin{proof}[Proof sketch] Let $U_t$ be the function $\sum_{i = 1}^{t-1} u_{x_i}(\cdot)$; let $W_t$ be the normalizing constant at round $t$, that is $W_t = \int_\configs \exp(\lambda U_t(\vec{\rho})) \, d\vec{\rho}$.

The proof follows by providing upper and lower bounds on  $W_{\numfunctions+1}/W_1$.
The upper bound on $W_{\numfunctions+1}/W_1$ in terms of the learner's expected payout follows as in the classic weighted majority algorithm, yielding:
$$\frac{W_{T+1}}{W_1} \leq  \exp\left(\frac{P(\mathcal{A})\left(e^{H\lambda} - 1\right)}{H}\right),$$ where $P(\mathcal{A})$ is the expected total payoff of the algorithm.

We use $(w,k)$-dispersion to lower bound $W_{\numfunctions+1}/W_1$ in terms
  of the optimal parameter's total payout.  The key insight is that not only  $\vec{\rho}^*$  the optimal parameter gets a good payoff in hindsight, but all the parameters in the ball of radius $w$ around $\vec{\rho}^*$  have a good total payoff.
Let $\vec{\rho}^*$ be the optimal parameter and let $\OPT = U_{T+1}(\vec{\rho}^*)$. Also, let $\mathcal{B}^*$ be the ball of radius $w$ around $\vec{\rho}^*$. From $(w,k)$-dispersion, we know that for all $\vec{\rho} \in \mathcal{B}^*$, $$U_{T+1}(\vec{\rho}) \geq \OPT - Hk - LTw.$$ Therefore, \begin{align*}
W_{T+1} &= \int_\configs \exp(\lambda U_{T+1}(\vec{\rho})) \, d\vec{\rho} \geq \int_{\mathcal{B}^*} \exp(\lambda U_{T+1}(\vec{\rho})) \, d\vec{\rho}\\
&\geq \vol(B(\vec{\rho}^*, w))\exp(\lambda (\OPT - Hk - LTw)).
\end{align*}
Moreover, $W_1 = \int_{\configs} \exp(\lambda U_1(\vec{\rho})) \, d\vec{\rho} \leq \vol(B(\vec{0}, R))$. Therefore, \[\frac{W_{T+1}}{W_1} \geq \frac{\vol(B(\vec{\rho}^*, w))}{\vol(B(\vec{0}, R))} \exp(\lambda (\OPT - Hk - LTw)).\]

Combining the upper and lower bounds on $\frac{W_{T+1}}{W_1}$ gives the result.
\end{proof}

Whether the continuous weighted
 majority algorithm can be implemented in polynomial time depends on
 the setting. Assume that for all rounds $t \in \{1, \ldots, \numfunctions\}$,  $\sum_{s=1}^t u_s$ is
  piecewise Lipschitz over at most $N$ pieces. It is not hard to prove that when $d=1$ and $\configs=\reals$ and $\exp(\sum_{s=1}^t
  u_s)$ can be integrated in constant time on each of its pieces, the running
  time is $O(\numfunctions N)$ per round.
    When $d > 1$ and $\sum_{s=1}^t u_s$ is piecewise
  concave over convex pieces, ~\cite{BDV18} provide an efficient approximate
  implementation by using tools from high-dimensional  geometry.

We now show that under natural smoothness conditions about the input instances, dispersion is satisfied for the knapsack and clustering problems discussed in Section~\ref{se:dislearn}. The proof structure in both cases is to use the functional form of the discontinuities of the corresponding utility functions
  to reason about the distribution of discontinuity locations that arise as transformations of
random problem parameters in algorithm configuration instances. Using this idea one can upper bound the expected number of functions with discontinuities in any fixed interval, and then obtain the final desired result by using a uniform convergence result summarized in the following lemma.

%
\begin{lemma} \label{thm:dispersionTool}
  Let $u_{x_1}, u_{x_2}, \dots, u_{x_T} : \reals \to \reals$ be piecewise
  $L$-Lipschitz functions, each having at most $N$ discontinuities, with independent randomness in their discontinuities.\footnote{The independence is between functions. Within a function, the discontinuities may be correlated.}
  Let $\mathcal{F} = \{f_I : \Pbi \to \{0,1\} \mid I \subset
  \reals \text{ is an interval}\}$, where $f_I : \Pbi \to \{0,1\}$ maps an instance $x \in \Pbi$ to $1$ if the interval
  $I$ contains a discontinuity for the utility function $u_x$, and $0$
  otherwise.
 With probability $1-\delta$
  over randomness in the selection of utility functions $u_{x_1}, u_{x_2}, \dots, u_{x_T}$  we have:
  $$
    \sup_{f_I \in \mathcal{F}} \left|
      \sum_{t=1}^T f_I(x_t)
      - \expect\left[
        \sum_{t=1}^T f_I(x_t)
      \right]
    \right| \leq O(\sqrt{T \log(N/\delta)}).
  $$

\end{lemma}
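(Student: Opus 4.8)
The plan is to read this as a uniform-convergence statement for the class $\mathcal{F}$ of interval-indicator functions and to prove it by bounding $\VC(\mathcal{F})$. The only structure I will use is that each $u_{x_t}$ has at most $N$ discontinuities and that the discontinuity locations are independent across the $T$ instances; any correlations among the discontinuities \emph{within} a single instance never enter the argument.

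\textbf{Step 1 (bounding the complexity of $\mathcal{F}$).} Fix instances $x_1,\dots,x_m$ and let $D_t\subset\reals$ be the set of (at most $N$) discontinuity locations of $u_{x_t}$. The labeling of $(x_1,\dots,x_m)$ produced by $f_I$ is $\big(\mathbb{I}[I\cap D_1\neq\emptyset],\dots,\mathbb{I}[I\cap D_m\neq\emptyset]\big)$, so it depends only on the intersection pattern of the interval $I$ with the set $\bigcup_t D_t$, which has at most $mN$ points. After sorting those points, that pattern is determined by the ranks of the two endpoints of $I$, so there are at most $(mN+1)^2$ distinct patterns and hence at most $(mN+1)^2$ distinct labelings. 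Shattering requires $2^m\le (mN+1)^2$, which forces $m=O(\log N)$; thus $\VC(\mathcal{F})=O(\log N)$. The same counting, applied to a realized sample $x_1,\dots,x_T$, shows that the number of distinct restrictions of functions in $\mathcal{F}$ to that sample is at most $(NT+1)^2$.

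\textbf{Step 2 (uniform convergence).} We want to control $\sup_{f_I\in\mathcal{F}}\big|\sum_t f_I(x_t)-\E[\sum_t f_I(x_t)]\big|$, and since the instances need not be identically distributed this is a sum-versus-sum-of-means deviation rather than the textbook empirical-versus-true-mean one. It is still handled by symmetrization, which uses only independence across $t$: one bounds the expectation of this supremum by a constant times the Rademacher complexity $\E\big[\sup_{f\in\mathcal{F}}|\sum_t\sigma_t f(x_t)|\big]$ with i.i.d.\ Rademacher $\sigma_t$. Conditioned on the sample, the inner object is a maximum over at most $(NT+1)^2$ sub-Gaussian sums of variance proxy $T$, so Massart's finite-class maximal inequality bounds it by $O(\sqrt{T\log(NT)})$; invoking instead $\VC(\mathcal{F})=O(\log N)$ through an entropy-integral (chaining) bound removes the spurious $\log T$ and gives $O(\sqrt{T\log N})$. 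Finally, the supremum changes by at most $1$ when a single $x_t$ is replaced, so McDiarmid's inequality upgrades the expectation bound to one holding with probability $1-\delta$ at the cost of an additive $O(\sqrt{T\log(1/\delta)})$. Combining the pieces yields $\sup_{f_I\in\mathcal{F}}\big|\sum_t f_I(x_t)-\E[\sum_t f_I(x_t)]\big|=O(\sqrt{T\log(N/\delta)})$.

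\textbf{What I expect to be the obstacle.} The delicate point is that $\mathcal{F}$ is infinite and the ``effective'' set of inequivalent intervals is itself random --- it is determined by the realized discontinuity locations --- so one cannot just enumerate the relevant intervals in advance and union-bound Chernoff tails for the independent Bernoullis $f_I(x_1),\dots,f_I(x_T)$. Symmetrization is exactly what finesses this: it replaces ``all intervals'' by ``all interval-behaviors on the observed sample,'' a quantity that Step 1 controls, and it never uses identical distributions. The non-identical distribution of the $x_t$ is otherwise only cosmetic, and the one place where care is genuinely needed is deciding whether the final rate is $\sqrt{T\log(N/\delta)}$ or the slightly weaker $\sqrt{T\log(NT/\delta)}$ --- the sharper form requires the chaining refinement rather than the cruder Sauer--Massart route.
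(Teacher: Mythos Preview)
Your proposal is correct and follows essentially the same approach as the paper: the paper states that ``the key step is to apply uniform convergence to the class of functions $\mathcal{F}$ defined in the lemma, and to prove that its VC-dimension is $O(\log N)$,'' which is exactly your Step~1 (via the $(mN+1)^2$ labeling count) followed by your Step~2. Your added care about non-identically-distributed instances (symmetrization plus McDiarmid) and the chaining refinement to avoid the extra $\log T$ are the right details to fill in, and the paper does not elaborate beyond the one-sentence sketch, so your write-up is more complete than the paper's but not a different route.
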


Intuitively, Lemma \ref{thm:dispersionTool} states the following.  Suppose that instead of a worst-case sequence of utility functions, there is some randomness in the locations of their discontinuities, where the randomness is independent between utility functions.  Then, with high probability, for every interval $I$, the actual number of discontinuities in $I$ will be close to its expectation, and in particular within an additive gap of at most $O(\sqrt{T \log(N/\delta)})$.  To prove this, the key step is to apply uniform convergence to the class of functions $ \mathcal{F} $ defined in the lemma, and to prove that its VC-dimension is  $O(\log N)$.  This lemma is from \cite{BDP20} and it improves over the earlier result in~\cite{BDV18}.

For the   family of greedy algorithms $\algfamily^{knapsack}$ discussed in Section \ref{sec:clusteringdistributional}, in the worst case,  the associated utility functions might not be dispersed.  However, we can perform a smoothed analysis
and show that if there is some randomness in the item values, then we have dispersion with high probability.  Formally, we assume item values are $b$-smooth: they are random, independent, and each has a density function upper bounded by $b$.  (For example, a canonical $b$-smooth distribution is a uniform distribution on an interval of width $1/b$.)  The dispersion guarantee is given in Theorem \ref{thm:knapsackDispersion} below.

  \begin{theorem} \label{thm:knapsackDispersion}
  Let $x_1, \dots, x_T$ be any sequence of knapsack instances with $n$ items and
  capacity $C$ where instance $i$ has sizes $s^{(i)}_1, \dots, s^{(i)}_n \in
  [1,C]$ and values $v^{(i)}_1, \dots, v^{(i)}_n \in (0,1]$. Assume that the
  item values are $b$-smooth. Then for any $\delta > 0$, with
  probability at least $1-\delta$, for any $w > 0$, the utility functions
  $u_{x_1}, \dots, u_{x_T}$ are $(w,k)$-dispersed for
$  k = O\left(w T n^2 b^2 \log(C) + \sqrt{T \log(n/\delta)}\right).$
\end{theorem}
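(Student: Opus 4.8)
The plan is to combine the structural description of the knapsack utility functions from the proof of Theorem~\ref{knapsack} with a smoothed analysis of where their discontinuities can lie, and then invoke the uniform-convergence bound of Lemma~\ref{thm:dispersionTool}. First I would recall the discontinuity structure: fix an instance $x_t$ with item values $v^{(t)}_1, \dots, v^{(t)}_n$ and sizes $s^{(t)}_1, \dots, s^{(t)}_n$; as shown in the proof of Theorem~\ref{knapsack}, $u_{x_t}(\rho)$ is piecewise constant and each of its at most $\binom{n}{2}$ breakpoints equals one of the critical values $\rho^{(t)}_{ij} = \ln\!\big(v^{(t)}_i / v^{(t)}_j\big) / \ln\!\big(s^{(t)}_i / s^{(t)}_j\big)$, ranging over pairs $i \neq j$ with $s^{(t)}_i \neq s^{(t)}_j$; pairs of equal size contribute no breakpoint, since for them the relative order of the two scores is independent of $\rho$. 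Thus each $u_{x_t}$ has at most $N := \binom{n}{2} = O(n^2)$ discontinuities, it is trivially piecewise $L$-Lipschitz (being piecewise constant), and, since the value vectors $v^{(t)}$ are independent across $t$, the randomness in the discontinuity locations is independent across $u_{x_1}, \dots, u_{x_\numfunctions}$ --- precisely the hypotheses of Lemma~\ref{thm:dispersionTool}.

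Next I would bound, for a fixed $t$ and a fixed pair $(i,j)$ with $a := \ln\!\big(s^{(t)}_i / s^{(t)}_j\big) \neq 0$, the probability that $\rho^{(t)}_{ij}$ falls in a given interval $I$ of length $2w$. The key step is a change of variables from $(v^{(t)}_i, v^{(t)}_j)$ to $(\rho^{(t)}_{ij}, v^{(t)}_j)$: since $\rho^{(t)}_{ij} = (\ln v^{(t)}_i - \ln v^{(t)}_j)/a$, we have $v^{(t)}_i = e^{a \rho^{(t)}_{ij}} v^{(t)}_j$, and the Jacobian factor is $|a|\, v^{(t)}_i$. The joint density of $(v^{(t)}_i, v^{(t)}_j)$ is at most $b^2$ by independence and $b$-smoothness; combining this with $v^{(t)}_i \le 1$ and $|a| \le \log C$ shows that the marginal density of $\rho^{(t)}_{ij}$ is $O(b^2 \log C)$ everywhere, hence $\prob[\rho^{(t)}_{ij} \in I] = O(b^2 w \log C)$. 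Taking a union bound over the at most $\binom{n}{2}$ pairs, $\prob[u_{x_t} \text{ has a discontinuity in } I] = O(n^2 b^2 w \log C)$, and summing over $t$ gives $\expect\big[\#\{t : u_{x_t} \text{ has a discontinuity in } I\}\big] = O(\numfunctions n^2 b^2 w \log C)$.

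Finally I would apply Lemma~\ref{thm:dispersionTool} with $N = O(n^2)$: with probability at least $1 - \delta$, simultaneously for every interval $I$, the number of functions $u_{x_t}$ having a discontinuity in $I$ is within $O(\sqrt{\numfunctions \log(N/\delta)}) = O(\sqrt{\numfunctions \log(n/\delta)})$ of its expectation. Hence, for every $w > 0$, every ball of radius $w$ --- an interval of length $2w$ --- is split by at most $k = O\big(w \numfunctions n^2 b^2 \log C + \sqrt{\numfunctions \log(n/\delta)}\big)$ of the partitions $\partition_1, \dots, \partition_\numfunctions$, which is exactly the asserted $(w,k)$-dispersion; when $w$ is so large that the first term already exceeds $\numfunctions$ the claim is vacuous, so no upper restriction on $w$ is needed.

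I expect the main obstacle to be the change-of-variables estimate in the second paragraph: one must arrange the computation so that the truncation $v^{(t)}_i \le 1$ together with $|a| \le \log C$ actually caps the marginal density of $\rho^{(t)}_{ij}$ by $O(b^2 \log C)$ --- rather than letting the factor $e^{a\rho}$ blow up over the relevant range of $\rho$ --- and one must handle cleanly the degenerate pairs with equal sizes (which produce no discontinuity) and the tie-breaking case flagged in the footnote of Theorem~\ref{knapsack}.
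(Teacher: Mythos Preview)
Your proposal is correct and follows essentially the same approach as the paper's proof sketch: identify the $O(n^2)$ critical values $\rho^{(t)}_{ij}=\ln(v^{(t)}_i/v^{(t)}_j)/\ln(s^{(t)}_i/s^{(t)}_j)$, show via the change of variables (Jacobian $|a|\,v^{(t)}_i$, with $v^{(t)}_i\le 1$ on the support and $|a|\le\log C$) that each has density bounded by $O(b^2\log C)$, sum over pairs and instances to bound the expected count in any interval, and then invoke Lemma~\ref{thm:dispersionTool} with $N=O(n^2)$ to pass to a uniform high-probability bound. In fact you supply more detail on the density computation than the paper, which defers that step to~\citet{BDV18}; your handling of the degenerate equal-size pairs and the support constraint ensuring $v^{(t)}_i\le 1$ is also correct.
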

\begin{proof} [Proof Sketch]
  Recall from Lemma~\ref{knapsack} that for a
  knapsack instance $x$ with item values $v_1, \dots, v_n$ and sizes $s_1, \dots, s_n$,
  the discontinuities of the utility $u_x$ only occur at parameter values where
  the relative ordering of two items swaps under the score $\sigma_\rho$. For
  items $i$ and $j$, let $c_{ij} = \log(v_i / v_j) / \log(s_i / s_j)$ be the
  critical parameter value where their relative scores swap. When the item
  values are independent and have $b$-bounded distributions, we are
  guaranteed that their joint density is also $b^2$-bounded. Using this~\cite{BDV18} prove that each  discontinuity is random and has a density function that is upper bounded by  $b^2 \log(C) / 2$.

%
 Next, fix
  any ball $I$ of radius $w$ (i.e., an interval of width $2w$). For any function
  $u_{x_i}$ the probability that any one of its discontinuities belongs to the
  interval $I$ is at most $w b^2 \log(C)$. Summing over both knapsack
  instances $x_1, \dots, x_T$ and the $O(n^2)$ discontinuities for each, it
  follows that the expected total number of discontinuities in interval $I$ is
  at most $w T n^2 b^2 \log(C)$. This is also a bound on the expected
  number of functions among $u_{x_1}, \dots, u_{x_T}$ that are discontinuous on
  the ball $I$.
Finally,  Lemma~\ref{thm:dispersionTool}
can be used to show
  that with probability $\geq 1-\delta$ any interval of radius $w$ has 
  $O(wTn^2 b^2 \log(C) + \sqrt{T \log(n/\delta)})$ discontinuous functions.
\end{proof}

Combining the dispersion analysis for the knapsack problem with the regret
guarantees for the continuous weighted majority algorithm, we can obtain an upper bound on the algorithm's expected regret.
In particular, applying Theorems~\ref{thm:knapsackDispersion} and~\ref{thm:1_d_online} with $\delta = 1/\sqrt{T}$ and $w = 1/(\sqrt{T} n^2 b^2 \log(C))$, and using that
 utilities take values in $[0,C]$, we have the following corollary (from~\citet{BDP20} which improves over the earlier result~\citep{BDV18}):

\begin{corollary} \label{cor:knapsackRegret}
  Let $x_1, \dots, x_T$ be any sequence of knapsack instances satisfying the
  same conditions as in Theorem~\ref{thm:knapsackDispersion}. The continuous weighted majority algorithm employed to
  choose parameters $\rho_1, \dots, \rho_T
  \in [0,R]$ for the sequence $x_1, \dots, x_T$ with $\lambda = \sqrt{\log(R /
  (\sqrt{T} n^2 b^2 \log(C))} / C$ has expected regret bounded by
  \[
  \expect\left[
    \max_{\rho \in [0,R]} \sum_{t=1}^T u_{x_t}(\rho) - \sum_{t=1}^T u_{x_t}(\rho_t)
  \right]
  = O\left(C\sqrt{T \log(RTn b\log(C))}\right).
  \]
\end{corollary}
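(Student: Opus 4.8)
The plan is to chain Theorem~\ref{thm:knapsackDispersion} and Theorem~\ref{thm:1_d_online} together under the stated choices of $w$, $\delta$, and $\lambda$, and then pay a negligible additive cost for the rare event on which the dispersion guarantee fails. There is no genuinely hard step: both ingredients are already in hand, so the argument is essentially bookkeeping, with the only delicate points being (i) recognizing that the knapsack utility functions are piecewise \emph{constant} (so there is no Lipschitz-versus-discretization tradeoff) and (ii) checking that the various logarithms collapse into a single $\log(RTnb\log C)$ after substitution.

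First I would apply Theorem~\ref{thm:knapsackDispersion} with $\delta = 1/\sqrt{T}$. This gives, with probability at least $1 - 1/\sqrt{T}$ over the $b$-smooth randomness in the item values, that $u_{x_1}, \dots, u_{x_T}$ are $(w,k)$-dispersed simultaneously for all $w > 0$ with $k = O\bigl(w T n^2 b^2 \log C + \sqrt{T\log(n/\delta)}\bigr)$. Substituting $w = 1/(\sqrt{T} n^2 b^2 \log C)$ makes the first term exactly $\sqrt{T}$; with $\delta = 1/\sqrt{T}$ the second term is $\sqrt{T\log(n\sqrt{T})}$, so $k = O(\sqrt{T \log(nT)})$.

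Next, on the event that this dispersion guarantee holds, I would invoke Theorem~\ref{thm:1_d_online} with $d = 1$, $\configs = [0,R]$ (contained in a ball of radius $R$), and $H = C$ --- the latter because sizes are at least $1$ and the capacity is $C$, so at most $C$ items are ever selected, each of value at most $1$, giving utilities in $[0,C]$. The key observation is that by Theorem~\ref{knapsack} each $u_{x_t}$ is piecewise constant, hence piecewise $L$-Lipschitz with $L = 0$, which annihilates the $TLw$ term. The prescribed $\lambda = \frac{1}{H}\sqrt{d\ln(R/w)/T}$ reduces to the stated value once $w$ is plugged in, and the regret bound becomes $O\bigl(C(\sqrt{T\log(R/w)} + k)\bigr)$. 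Since $R/w = R\sqrt{T} n^2 b^2 \log C$, we have $\log(R/w) = O(\log(RTnb\log C))$, and since likewise $k = O(\sqrt{T\log(nT)}) = O(\sqrt{T\log(RTnb\log C)})$, the conditional expected regret (over the learner's internal randomness) is $O\bigl(C\sqrt{T\log(RTnb\log C)}\bigr)$.

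Finally I would absorb the failure event. With probability at most $1/\sqrt{T}$ the dispersion guarantee fails, and on that event the regret is trivially at most $TC$ (at most $H = C$ per round), so this contributes at most $\frac{1}{\sqrt{T}} \cdot TC = C\sqrt{T}$ to the overall expectation, which is swallowed by $O(C\sqrt{T\log(RTnb\log C)})$; combining with the previous paragraph yields the claim. The one residual technicality is the hypothesis $B(\vec{\rho}^*, w) \subset \configs$ in Theorem~\ref{thm:1_d_online}: since the $u_{x_t}$ are piecewise constant, their sum is maximized on a union of intervals, so one can choose a maximizer $\vec{\rho}^*$ at distance at least $w$ from the endpoints of $[0,R]$ unless every maximizing interval has width below $2w$; the degenerate case is handled by an infinitesimal enlargement of $\configs$ (or ruled out by taking $R$ large relative to $w$), and in any event does not affect the asymptotics. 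I expect this last bookkeeping point, together with verifying the logarithm collapse, to be the only places requiring any care.
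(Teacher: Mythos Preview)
Your proposal is correct and follows exactly the approach the paper takes: the paper derives the corollary in one line by applying Theorems~\ref{thm:knapsackDispersion} and~\ref{thm:1_d_online} with $\delta = 1/\sqrt{T}$, $w = 1/(\sqrt{T} n^2 b^2 \log C)$, and $H = C$, and you have simply spelled out the bookkeeping (the $L=0$ observation, the collapse of the logarithms, and the absorption of the failure event) that the paper leaves implicit.
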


For the $\algfamily^{scl}$-linkage algorithm family analyzed in Section \ref{sec:clusteringdistributional},~\citet{BDP20}  show  the following  guarantees:

\begin{theorem}~\label{cor:linkageRegret}
  Let $x_1, \dots, x_T$  be a sequence of clustering instances over $n$ points and  let
  $D_1, \dots, D_T \in [0,M]^{n \times n}$ be their corresponding distance
  matrices. Assume that the
  pairwise distances for each instance are $b$-smooth: for all $t \in \{1, \ldots, T\}$
   the entries
  of $D_t$ are random, independent, and have
  density functions that are bounded by $b$.  Assume further that the utility functions
  are bounded in $[0,H]$. The continuous weighted majority algorithm employed to
  choose parameters $\rho_1, \dots, \rho_T \in [0,1]$ for the sequence $D_1,
  \dots, D_T$ with $\lambda = \sqrt{\log(1 / (\sqrt{T} n^8 b^2 M^2)} / H$
  has expected regret bounded by
  $$
  \expect\left[
  \max_{\rho \in [0,R]} \sum_{t=1}^T u_{x_t}(\rho)
  -
  \sum_{t=1}^T u_{x_t}(\rho_t)
  \right]
  = O\left(H\sqrt{T \log(T n b M)}\right). $$
\end{theorem}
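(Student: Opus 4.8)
The plan is to follow the same two-stage recipe used for knapsack in Theorem~\ref{thm:knapsackDispersion} and Corollary~\ref{cor:knapsackRegret}: first prove that, under $b$-smooth pairwise distances, the utility functions $u_{x_1},\dots,u_{x_T}$ of the $\algfamily^{scl}$-linkage family are $(w,k)$-dispersed with high probability for a suitable $k$, and then substitute this guarantee into the continuous weighted majority regret bound of Theorem~\ref{thm:1_d_online}, tuning $w$ and the failure probability $\delta$.

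The first task is to locate the discontinuities of $u_x(\rho)$. By Lemma~\ref{lem:single}, for each fixed instance $x$ the function $u_x(\rho)$ is piecewise constant, and every breakpoint is one of the critical values $\crit(C_1,C_2,C_1',C_2') = \deltamin/(\deltamin-\deltamax)$ with $\deltamin = \dmin(C_1',C_2')-\dmin(C_1,C_2)$ and $\deltamax = \dmax(C_1',C_2')-\dmax(C_1,C_2)$. Since each of $\dmin(\cdot,\cdot)$ and $\dmax(\cdot,\cdot)$ equals the distance between some pair of data points, every such critical value can be written as $\varphi(a_1,a_2,a_3,a_4) = (a_3-a_1)/\bigl((a_3-a_1)+(a_2-a_4)\bigr)$ evaluated at four pairwise distances; hence the set of breakpoints of $u_x$ is contained in the set $\{\varphi(d(p_1,q_1),d(p_2,q_2),d(p_3,q_3),d(p_4,q_4))\}$ ranging over all $4$-tuples of point pairs, of which there are at most $n^8$ (matching the bound in Lemma~\ref{lem:single}).

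Next I bound the probability that a breakpoint lands in a given ball $I$ of radius $w$. Fix such an $I$ and a $4$-tuple of point pairs, and assume the four pairs are distinct edges (the remaining degenerate configurations, as well as critical values falling outside $[0,1]$, contribute nothing and are discarded). Then $P := a_3-a_1$ and $Q := a_2-a_4$ are independent, each the difference of two independent $b$-smooth distances and so with density at most $b$ by convolution, so the joint density of $(P,P+Q)$ is at most $b^2$, while $|P|\le M$ and $|P+Q|\le 2M$. The event $\{\varphi = P/(P+Q)\in I\}$ is, in the $(P,P+Q)$-plane, a cone of slope-width $O(w)$, which meets $[-M,M]\times[-2M,2M]$ in area $O(wM^2)$; therefore $\prob[\varphi\in I] = O(wb^2M^2)$. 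Summing over the at most $n^8$ breakpoints of each of $u_{x_1},\dots,u_{x_T}$, the expected number of these functions discontinuous inside $I$ is $O(Tn^8wb^2M^2)$, and Lemma~\ref{thm:dispersionTool} (with $N=n^8$, so $\log N = O(\log n)$) upgrades this to a uniform statement: with probability at least $1-\delta$, every radius-$w$ ball is split by at most $k = O\bigl(Tn^8wb^2M^2 + \sqrt{T\log(n/\delta)}\bigr)$ of the functions, i.e.\ the collection is $(w,k)$-dispersed. Plugging this into Theorem~\ref{thm:1_d_online} with $d=1$, $R=1$, and $L=0$ (each $u_x$ is piecewise constant), and choosing $w=\Theta\bigl(1/(\sqrt T n^8 b^2 M^2)\bigr)$ so that $Tn^8wb^2M^2=\Theta(\sqrt T)$ and $\log(1/w)=O(\log(TnbM))$, together with $\delta=1/\sqrt T$ (which contributes only $HT\delta = O(H\sqrt T)$ to the expectation over the failure event), yields the claimed expected regret $O\bigl(H\sqrt{T\log(TnbM)}\bigr)$; the stated $\lambda$ is exactly the prescription $\tfrac1H\sqrt{\ln(1/w)/T}$ of Theorem~\ref{thm:1_d_online}, and the mild boundary issue $B(\rho^*,w)\not\subset[0,1]$ only costs a constant factor.

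The main obstacle is controlling the law of the critical values $\deltamin/(\deltamin-\deltamax)$. Two subtleties must be handled. First, $\dmin$ and $\dmax$ are extrema over many pairwise distances and are therefore not themselves $b$-smooth; this is circumvented by expanding over which point pair realizes each extremum \emph{before} taking probabilities, so that one works only with genuinely $b$-smooth coordinates, at the (already budgeted) cost of the $n^8$ factor. Second, unlike the knapsack critical value $\log(v_i/v_j)/\log(s_i/s_j)$, here the breakpoint is a ratio whose denominator may be arbitrarily small, so its density cannot be bounded pointwise; one must instead bound directly the probability of landing in an interval, via the area-of-a-cone estimate above, which is precisely why the parameters involve $b^2M^2$ rather than a pure density bound.
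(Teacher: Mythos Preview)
Your proposal is correct and follows essentially the same route as the paper's proof sketch: establish $(w,k)$-dispersion with $k = O\bigl(wTn^8 b^2 M^2 + \sqrt{T\log(n/\delta)}\bigr)$ via Lemma~\ref{lem:single} together with a density argument for the critical values, invoke Lemma~\ref{thm:dispersionTool}, and then plug into Theorem~\ref{thm:1_d_online} with $w = \Theta\bigl(1/(\sqrt{T}n^8 b^2 M^2)\bigr)$. Your write-up is in fact more detailed than the paper's sketch---in particular the cone/area computation showing $\prob[\varphi\in I]=O(wb^2M^2)$ and the explicit discussion of why one enumerates the $n^8$ point-pair $4$-tuples rather than reasoning about $\dmin,\dmax$ directly---but the structure and all quantitative choices match.
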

\begin{proof} [(Proof Sketch)]
Using Lemma~\ref{lem:single} and properties of   $b$-bounded random variables we can show that  for any $\delta > 0$,
  with probability $1-\delta$, for any $w > 0$, the utility functions
  $u_{D_1}, \dots, u_{D_T}$ are $(w,k)$-dispersed for
  $ k = O\left(
  w T n^8 b^2 M^2 + \sqrt{T \log(n/\delta)}
  \right).
  $
  By choosing  $w =1/(\sqrt{T} n^8 b^2 M^2)$ and using Theorem~\ref{thm:1_d_online}, we obtain the  result.
\end{proof}

\cite{BDV18} also show good dispersion bounds for the family of algorithms for solving IQPs
discussed in  Section~\ref{se:dislearn} (SDP relaxations followed by parametrized rounding); interestingly, here  the dispersion condition holds due to internal randomization in the algorithms themselves (with no additional smoothness assumptions about the input instances).

\paragraph{Extensions} Extensions to the results presented here include:
\begin{itemize}

\item A regret bound of $\tilde
O(\numfunctions^{(d+1)/(d+2)}(H\sqrt{d(3R)^d} + L))$ for the bandit setting. While this is more realistic in terms of feedback per round,  the regret bound is significantly worse than that for the full information setting~\citep{BDV18}.

\item A better regret bound, similar to that of Theorem \ref{thm:1_d_online}, and a
computationally efficient implementation for semi-bandit online
optimization problems where evaluating the cost function of one
algorithm reveals the cost for a range of similar algorithms~\citep{BDP20}.

\item Results  for the shifting experts setting, showing that we can compete not only with the best fixed parameter setting in hindsight, but with a tougher benchmark, namely withthe  best  strategy  in  hindsight  that  can  shift  a  limited  number  of  times  between parameter settings, allowing our algorithm to adapt to changing environments~\cite{BDD20}.

\item An application of the dispersion condition to the offline
learning setting of Section~\ref{se:dislearn}, specifically the derivation of more refined
data-dependent uniform convergence guarantees using empirical Radamacher complexity~\citep{BDV18}.
\end{itemize}

\section{Summary and Discussion}
The results in Section~\ref{se:dislearn} showing a pseudo-dimension of $O(\log n)$ also lead to computationally efficient learning algorithms, because one can identify and try out a polynomial number of parameter choices. Those with larger pseudo-dimension generally need to use additional problem structure to achieve polynomial-time optimization and learning.

\paragraph{Other Directions}
Other recent theoretical works~\citep{WGS18,KLL17} consider data-driven algorithm design for runtime among a finite number of algorithms. Their goal is to select  an algorithm whose expected running time, after removing a $\delta$ probability mass of instances, is at most $(1 + \epsilon)OPT$, where $OPT$ is the expected running time of the best of the $n$ algorithms on instances from $D$. The key challenge they address is to minimize the total running time for learning in terms of $n$, $OPT$, $\epsilon$, and $\delta$, and without any dependence on the maximum runtime of an algorithm. Note that as opposed to most of the work presented in this chapter these papers do not assume any assume structural relations among the $n$ algorithms. It would be very interesting to combine these two lines of work.

A related line of work presents sample complexity bounds derived for data-driven mechanism  design for revenue maximization settings~\citep{TJ15,BSV16,BSV18}.
The general theorem of~\cite{BDDKSV19} (mentioned in Section~\ref{generic}) can be used to recover the  bounds in these papers.
Furthermore, the dispersion tools derived in  Section~\ref{se:online} have been used for providing estimators for the degree of approximate incentive compatibility of an auction, another important problem in modern auction design~\citep{BSV19}.

\paragraph{Open Directions}
Data driven algorithm design has the potential to fundamentally shift the way we analyze and design algorithms for combinatorial problems.  In addition to scaling up the techniques developed so far and also using them for new problems, it would be interesting to develop new analysis frameworks that lead to even better automated algorithm design techniques.  For example, it would be interesting to explore a reinforcement learning approach, where we would learn state-based decision policies that use properties of the current state of the algorithm (e.g., the search tree for MIPs) to determine how to proceed (e.g., which variable to branch on next at a given node).
It would also be interesting to develop tools for learning within a single problem instance (as opposed to learning across instances).

In addition to providing theoretically sound and practically useful data-driven algorithmic methods, in the long term, this  area has the potential to give rise to new algorithmic paradigms of the type humans were not able to design before.

\bibliographystyle{abbrvnat}
\bibliography{chap29}

\end{document}